\documentclass[journal,10pt]{IEEEtran}

\usepackage{amsmath,amssymb,amsthm}
\usepackage{graphicx}
\usepackage{booktabs}
\usepackage{array}
\usepackage{tabularx}
\usepackage{cite}
\usepackage{url}
\usepackage{balance}
\usepackage{enumitem}

\newtheorem{theorem}{Theorem}
\newtheorem{proposition}{Proposition}
\newtheorem{corollary}{Corollary}

\newcommand{\R}{\mathbb{R}}

\newcommand{\calP}{\mathcal{P}}

\newcommand{\argmax}{\operatorname*{arg\,max}}

\newcolumntype{Y}{>{\raggedright\arraybackslash}X}

\title{A Graph-Based Stackelberg Security Game for Trustworthy 6G Disaggregated Architecture}

\author{Lav R. Varshney and Xinbo Wu%
\thanks{L. R. Varshney is with the AI Innovation Institute, Stony Brook University, Stony Brook, NY 11794 USA (e-mail:lav.varshney@stonybrook.edu). X. Wu was with the University of Illinois Urbana-Champaign.}
\thanks{ChatGPT 5.6-Sol was used to support experiment implementation, mathematical development, and writing.}}

\begin{document}
\maketitle

\begin{abstract}
Open, cloud-native, and AI-enabled 6G architectures make network functions easier to deploy, observe, and replace, often implemented with separate software modules.
However, each explicit interface can also create an entry point, trust transition, and lateral-movement route from a cybersecurity perspective. This paper formulates architectural disaggregation as a graph-based Stackelberg design game for 6G system security. A deployment graph specifies candidate 6G system seams, while an architecture-dependent directed attack graph specifies externally reachable functions and multistage paths. The defender commits to seams and allocates hardening and monitoring; an informed attacker then chooses a path and effort. Because edge traversal and evasion factors compose multiplicatively, the attacker's path response becomes a shortest-path problem after a logarithmic transformation. Convex effort costs admit a conjugate representation, and, for each fixed architecture, the defender's control problem is a convex exponential-epigraph program solvable by path generation. We derive an architecture-comparison criterion, a threshold for AI transducers within 6G systems that jointly add visibility and exposure, and comparative statics for attacker entry and bypass innovation. Experiments with stylized standards-anchored O-RAN/6G systems over RAN, RIC, core, cloud, and edge-AI functions show that neither full integration nor maximal disaggregation is generally best. Uncontrolled seams can increase risk, whereas selectively monitored and hardened seams can improve the Stackelberg objective. We find a boundary is worthwhile only when its detection and containment gains exceed its added attack opportunities and coordination cost.
\end{abstract}

\begin{IEEEkeywords}
6G, O-RAN, architectural disaggregation, attack graph, Stackelberg game, cyber defense
\end{IEEEkeywords}

\section{Introduction}
\IEEEPARstart{T}{he} July 2026 \emph{Call to Action for 6G Leadership and Security} asks partner governments to advance 6G systems that are open, interoperable, secure, resilient, AI-supported, and resource-efficient \cite{ntia2026call}. A preceding joint statement likewise endorsed 6G technology that is secure, open, interoperable, reliable, and resilient by design \cite{state2024principles}. The interoperability point implies softwarization of 6G system design, e.g.\ as in O-RAN frameworks, should enable modules from different vendors to substitute for one another. With standardized interfaces, functions can be independently deployed, inspected, and replaced.  Yet, introducing such interfaces rather than having single-vendor monolithic design creates attack surfaces---endpoints, credentials, software dependencies, and trust transitions that an attacker can exploit.  At the same time, such interfaces also enhance visibility that may be used to protect against attacks.  We consider the tension between monolithic and modular architecture from a security perspective. 

The same tension appears not just in policy documents, but also in technical 6G roadmaps. ITU-R expects IMT-2030 to be secure by design while using standardized interoperable interfaces and integrating distributed computing and AI capabilities \cite{itu2023imt2030}. The Next Generation Mobile Networks (NGMN) alliance lists modular delivery and trustworthiness as key architectural principles \cite{ngmn2025architecture}. In O-RAN, open fronthaul, E2, A1, O1, O2, F1, R1, and service-management interfaces support independently implemented radio, control, orchestration, and application functions, but also create interface-specific threats and controls \cite{polese2023understanding,porambage2024security,baguer2024attacking,oran2026architecture,oran2026security,oran2026threat}. 

Introducing AI into the system influences both sides of the tradeoff. Edge intelligence and cloud-native services can improve visibility and control \cite{zeb2021edge,letaief2022edgeai,soltani2025intelligent}. EdgeRIC, for example, places a real-time controller near the distributed unit (DU), decouples it from the RAN stack, and enables sub-millisecond AI-in-the-loop control \cite{ko2024edgeric}. Such transducers expose fine-grained telemetry and localize anomalous behavior.  At the same time, they also add high-dimensional vulnerabilities. From a security perspective, therefore, one might wonder whether a particular boundary, together with the controls placed there, improves the strategic balance against an attacker who chooses both route and effort.

Attack graphs provide a natural representation of multistage cyberattacks/compromises \cite{phillips1998graph,sheyner2002attackgraphs,ammann2002scalable}, and Stackelberg games are natural when a defender commits to an architecture before an attacker selects a target or plan \cite{alpcan2010network,tambe2011security}. There is past work on strategic defense over attack graphs, including optimizing honeypots, hardening portfolios, and Bayesian control selection against adaptive attackers \cite{durkota2015optimal,durkota2016case,durkota2019hardening,zhang2021bayesian}. Those formulations largely take the underlying architecture as given. Here we ask the question one level up: which interfaces and trust transitions should exist at all? 
Conversely, architectural discussions often count interfaces without allowing the attacker to reroute or scale effort. Here the architecture itself is part of the leader's commitment.  That is, our contribution is not the use of an attack graph or Stackelberg formulation, but rather their coupling in the architectural disaggregation decision.

The main contributions of the paper are as follows.
\begin{enumerate}
\item We formulate a two-level 6G architecture model: a deployment graph determines modules and coordination costs, while an architecture-dependent directed attack graph determines entry points, traversal opportunities, monitoring, and hardening.
\item We solve the attacker's path-and-effort best response. Multiplicative path yield becomes a shortest-path problem, and convex effort costs yield a closed-form solution through convex conjugacy. The quadratic case yields an explicit Stackelberg loss term for cyberattack (cyber-loss).
\item We prove that, for a fixed architecture, optimal hardening and monitoring form a convex program. We derive a path-generation implementation and an architecture-comparison criterion stating when the security gain of a proposed boundary pays for its added cost.
\item We obtain comparative statics giving some policy-relevant insights: an AI transducer is beneficial when its visibility and hardening overcome its added exposure, improved defense contracts the set of profitable attacker types, and sufficiently protected ordinary routes shift the residual threat toward actors that can develop novel bypasses.
\item We consider modular partitioning of a stylized standards-anchored O-RAN system architecture, finding the seams to introduce that yield the best intermediate level of modularity from an architectural security perspective.
\end{enumerate}
We aim to show the mathematical structure of the monolithic--modular architecture tradeoff in trustworthy 6G design.

Note that unlike spectral or multilevel graph partitioning, which seek low-cut or high-affinity vertex groups under fixed edge weights, our partition is induced by technologically admissible architectural seams and is evaluated after an informed attacker reoptimizes its target, route, and effort. It is an adversarial network design problem rather than a community detection problem \cite{RoddenberrySWS2020}.  The formulation may be more generally applicable for other engineering system design problems. 

\section{Standards-Inspired
Architecture and Threat Model}

This section develops the mathematical framework for system architecture and threat model.  It is grounded in 6G and particularly O-RAN frameworks, as discussed.

\subsection{Background: 6G, O-RAN, and zero-trust controls}
\label{sec:oran}

\begin{table*}[t]
\caption{Candidate O-RAN Architectural Seams}
\label{tab:seams}
\centering
\small
\renewcommand{\arraystretch}{1.12}
\begin{tabularx}{\textwidth}{@{}p{0.055\textwidth}p{0.175\textwidth}YY@{}}
\toprule
Symbol & Boundary & Architectural role & Exposure and control interpretation \\
\midrule
FH & O-RU--O-DU & Open-fronthaul split between radio and lower-layer baseband functions. & Adds a remotely reachable fronthaul surface; enables protocol-specific monitoring and hardening. \\
RT & O-DU--EdgeRIC & Real-time AI control placed near the DU. & Adds a high-leverage controller/API surface; supplies strong local telemetry. \\
F1 & O-DU--O-CU-CP/UP & Distributed--centralized RAN split over F1-C/F1-U. & Adds control- and user-plane seams with moderate coordination and monitoring value. \\
RIC & O-CU/DU--near-RT RIC--xApp & Makes E2 control and xApp lifecycle boundaries explicit. & Adds application and control entry points; offers high telemetry value but substantial software exposure. \\
SMO & RIC--SMO--rApp/O-Cloud & Represents A1, O1, O2, R1, model, and orchestration interactions. & Creates management and model-supply paths; enables policy and lifecycle monitoring. \\
N2N3 & RAN--5GC & Separates RAN control and user planes from AMF/UPF. & Adds N2/N3 trust transitions with relatively mature control mechanisms. \\
SBA & 5GC service functions & Makes service-based core functions independently addressable. & Adds API, discovery, authorization, and exposure paths; enables service-level monitoring. \\
EDGE & UPF--MEC/AI service & Separates user-plane anchoring from edge applications and AI inference. & Adds application and model exposure; supports localized flow and application monitoring. \\
\bottomrule
\end{tabularx}
\end{table*}

Practical experimentation of attacks on O-RAN has provided concrete descriptions of threats \cite{baguer2024attacking}.  The intelligent control literature treats the RIC simultaneously as a security risk and a defensive opportunity \cite{soltani2025intelligent}. Notably, the O-RAN WG11 specifications assign security requirements and risk analysis to these architectural elements \cite{oran2026security,oran2026threat}. Zero-trust guidance emphasizes explicit authentication and authorization, least privilege, and continuous assessment rather than implicit trust based on location \cite{nist2020zta,oran2024zta}. In our formulation, these controls map naturally to the notion that hardening reduces successful traversal and monitoring reduces undetected traversal in attack paths. A logical boundary provides no security by itself; it has value only insofar as deployable controls change the attacker's best route.

A stylized O-RAN architecture has 15 functions: O-RU, O-DU, O-CU-CP, O-CU-UP, a real-time EdgeRIC, near-RT RIC, xApp, SMO/non-RT RIC, rApp/model manager, O-Cloud, AMF, SMF, UPF, an aggregated NRF/NEF/AUSF service, and an MEC/AI service, see \cite{threegpp23501,oran2026architecture}.  As shown in Table~\ref{tab:seams}, there are 8 key boundaries where modularization can be performed: open fronthaul, real-time RIC/DU, F1, E2/xApp, A1/O1/O2/rApp, RAN/core, service-based core, and UPF/edge-AI boundaries.  The table also gives much more detail on why these are potential points of modularization, and what kinds of visibility/control can be imposed.

\subsection{Deployment graph and candidate seams}
Let $G_F=(V,E_F)$ be an undirected functional-dependency graph. A vertex is a network function or control service, and an edge means the endpoints exchange user-plane, control-plane, policy, management, telemetry, or model information. An architecture is represented by a set of candidate seams $\sigma\subseteq\mathcal{B}$, where $\mathcal{B}$ is the finite set of candidate architectural seams, the collection of disaggregation decisions available to the defender. Removing the edges associated with $\sigma$ creates a partition $\pi_\sigma$ of the system into $K(\sigma)$ modules. A monolithic architecture has $\sigma=\emptyset$ and $K=1$.

There are some initial costs with different architectures independent of security considerations (non-cyber cost):
\begin{equation}
 C_{\rm arch}(\sigma)
 =c_M\bigl(K(\sigma)-1\bigr)
 +\sum_{b\in\sigma}\kappa_b ,
 \label{eq:archcost}
\end{equation}
where $c_M$ is lifecycle overhead for an independently deployed module and $\kappa_b$ is interface signaling, serialization, synchronization, identity management, interoperability testing, and certification. Equation~\eqref{eq:archcost} is meant to be interpretable, but a more detailed implementation can replace $\kappa_b$ by measured latency, message rate, or engineering labor without changing the security game.

This architecture cost is meant to be a baseline associated with modularity and is related to the thermodynamics  of modularity. Boyd, Mandal, and Crutchfield showed that localized operations can irreversibly discard globally useful correlations, producing a modularity cost beyond ordinary Landauer erasure \cite{boyd2018thermodynamics}. As such there is an operational cost of coordination information made unavailable when jointly useful states are separated across a boundary. Our $\kappa_b$ is similar from the need for signaling, synchronization, and interoperability testing/certification.

\subsection{Architecture-dependent attack graph}
Now consider attacks.  For architecture $\sigma$, let
\begin{equation}
G_A(\sigma)=\bigl(\{s\}\cup V,E_A(\sigma)\bigr)
\end{equation}
be a directed attack graph, with external source $s$. A directed edge $e=(u,v)$ means that external access to $v$ or compromise of $u$ can enable compromise, control manipulation, or evasion at $v$. Selecting a seam can alter $G_A$ in two ways: it can increase the baseline attractiveness of traversal edges that cross a newly explicit interface; and it can create a new source edge $s\to v$ representing an API, management endpoint, application package, or supply-chain entry point.
Thus attack surface is represented by number and quality of feasible paths rather than by an arbitrary penalty proportional only to interface count.

Each edge has a baseline exploitation/traversal factor $\rho_e^0\in(0,1]$ and a baseline evasion factor $q_e^0\in(0,1]$. A controllable edge is assigned to a security site $j(e)$. The defender allocates hardening $h_j\geq0$ and monitoring $n_j\geq0$. The resulting undetected traversal factor is
\begin{equation}
 r_e(h,n)=\rho_e^0 q_e^0
 \exp\{-\gamma_j h_j-I_j n_j\},
 \label{eq:edgefactor}
\end{equation}
where $\gamma_j$ is hardening effectiveness and $I_j$ is detector effectiveness. Uncontrolled edges use $h_j=n_j=0$. Hardening can summarize authentication, least privilege, protocol validation, isolation, software assurance, or patching. Monitoring can summarize an interface-level detector operated at a specified false-alarm constraint. Under repeated conditionally independent observations, an exponential miss model is consistent with standard detection-error exponents \cite{kay1998detection,chernoff1952measure}; operationally, $I_j$ may instead be fitted directly from a detector's receiver operating characteristic.

To emphasize the point, we use an exponential undetected-traversal model for two related reasons. First, under fixed false-alarm operation, the miss probability of a statistically regular interface detector often decays exponentially with effective sample size or monitoring duration; $I_j$ represents the corresponding empirical or asymptotic information rate. Second, we model hardening through constant proportional reduction in residual traversal effectiveness, yielding the proportional-hazards form $\rho_e(h_e) = \rho_e^0e^{-\gamma_eh_e}$. Hence the negative logarithm of undetected traversal effectiveness is additive in hardening, monitoring, and attack-path composition. The exponential form should be thought of as a tractable form from basic statistical signal processing rather than a universal empirical finding.  Note that irreducible-risk floors and detector-response delays can be incorporated without changing the basic Stackelberg framework.

The product model assumes conditionally independent edge events along a path. This need not be literal independence: $r_e$ can be interpreted as a calibrated conditional factor given arrival at edge $e$. Correlated evidence can be incorporated by replacing the sum of edge exponents with an empirically estimated path-level exponent.

\subsection{Path yield}
Let $\calP(\sigma)$ be the set of directed paths from $s$ to attack-relevant targets. If path $p$ ends at target $t(p)$ with consequence $V_{t(p)}>0$, define
\begin{equation}
 \Theta_p(\sigma,h,n)
 =V_{t(p)}\prod_{e\in p}r_e(h,n).
 \label{eq:pathyield}
\end{equation}
The quantity $\Theta_p$ is the expected undetected mission yield per unit attacker effort. The architecture's maximum attack yield is
\begin{equation}
 \Theta(\sigma,h,n)=\max_{p\in\calP(\sigma)}\Theta_p(\sigma,h,n).
 \label{eq:theta}
\end{equation}
A high-value target can therefore be dominated by a lower-value target that is much easier to reach and evade detection.

This model is intended to capture several features particularly salient in disaggregated 6G systems. First, architectural boundaries are not merely abstract graph cuts: interfaces such as open fronthaul, E2, A1/O1/O2, service-based core APIs, and edge-AI boundaries create explicit trust transitions that can simultaneously introduce new remotely reachable attack surfaces and new locations for authentication, isolation, telemetry, and monitoring. Second, attacks on such systems are naturally multistage, since reaching a high-consequence control or forwarding function may require a sequence of entry, lateral movement, privilege transition, and evasion steps. This motivates representing path effectiveness multiplicatively, while hardening and monitoring act locally on the corresponding trust transitions. The Stackelberg timing we now develop reflects that standards choices, module boundaries, and deployed controls are committed on a substantially slower timescale than an attacker choosing a route and campaign effort, allowing the attacker to adapt to the exposed architecture.

\section{The Graph-Based Stackelberg Game}
Consider the Stackelberg game shown in Fig.~\ref{fig:game}.
The defender is the leader and publicly commits to
\begin{equation}
 d=(\sigma,h,n).
\end{equation}
The attacker observes $d$ and chooses a path $p\in\calP(\sigma)$ and a nonnegative effort $a$. Its utility is
\begin{equation}
 U_A(p,a;d)=a\Theta_p(d)-g(a),
 \label{eq:attackerutility}
\end{equation}
where $g$ is closed, strictly convex, increasing on $\R_+$, satisfies $g(0)=0$, and grows superlinearly. The defender's expected mission loss is $a\Theta_p$ and its total objective is
\begin{equation}
 J_D(d;p,a)=C_{\rm arch}(\sigma)+c_h^T h+c_n^T n
 +\lambda a\Theta_p(d),
 \label{eq:defenderraw}
\end{equation}
where $\lambda$ converts the consequence of cyber attack into the same normalized units as architecture and control cost, which themselves have costs $c_h$ and $c_n$.

\begin{figure}
\centering
\includegraphics[width=\columnwidth]{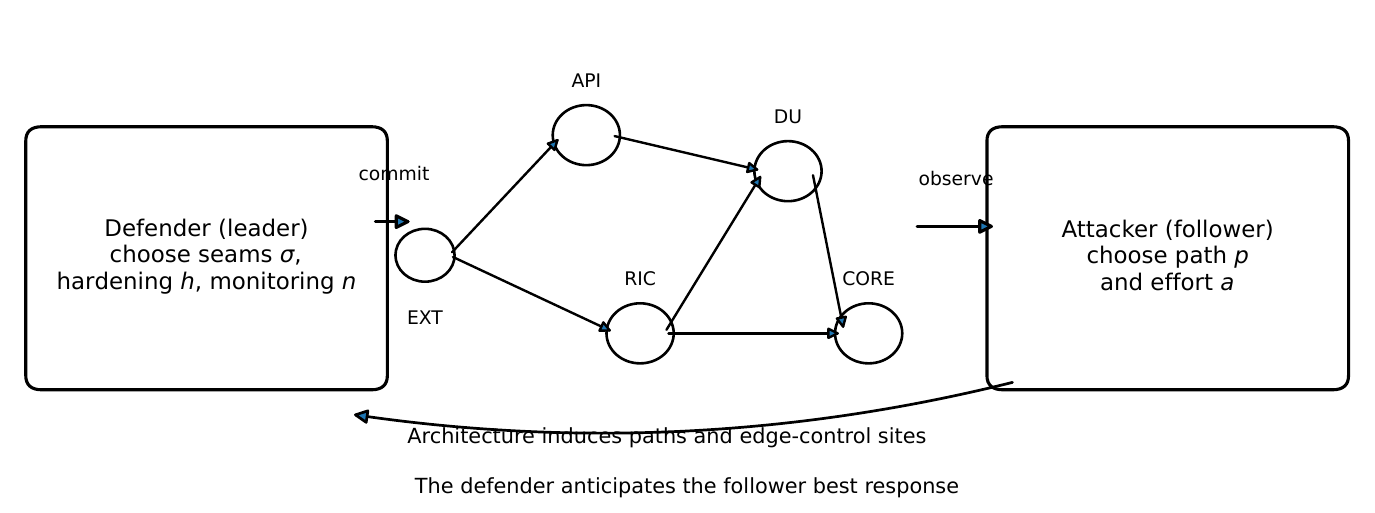}
\caption{Stackelberg timing. The defender changes both the feasible attack graph and the sites at which controls may be placed.}
\label{fig:game}
\end{figure}

\subsection{Attacker best response}
Define the restricted convex conjugate \cite{rockafellar1970convex}
\begin{equation}
 g_+^*(y)=\sup_{a\geq0}\{ay-g(a)\}.
\end{equation}

\begin{proposition}[Path and effort best response]
For fixed defender design $d$, the attacker chooses a path in $\argmax_p\Theta_p(d)$ and obtains
\begin{equation}
 U_A^*(d)=g_+^*(\Theta(d)).
 \label{eq:conjugate}
\end{equation}
If $g$ is differentiable and the optimum is interior, $g'(a^*)=\Theta(d)$.
\end{proposition}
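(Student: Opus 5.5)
The plan is to split the joint maximization over $(p,a)$ into an inner maximization over paths for fixed effort, followed by an outer maximization over effort. For any fixed $a\ge0$, the utility \eqref{eq:attackerutility} is $a\Theta_p(d)-g(a)$. Since $a\ge0$ and $g(a)$ does not depend on $p$, maximizing over $p\in\calP(\sigma)$ amounts to maximizing $\Theta_p(d)$. For $a>0$ this is exactly the problem \eqref{eq:theta}. For $a=0$ every path is optimal, so some maximizer of $\Theta_p$ is still optimal.

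I will first check that $\calP(\sigma)$ is finite, so that the maximum in \eqref{eq:theta} is attained. It is finite if paths are taken to be simple paths in the finite graph $G_A(\sigma)$. If walks were allowed, the claim would still hold, because every $r_e\le1$, so cycles never increase $\Theta_p$ and a simple path attains the supremum. This gives
\[
\sup_{p,a}U_A=\sup_{a\ge0}\{a\Theta(d)-g(a)\}=g_+^*(\Theta(d)),
\]
by the definition of the restricted conjugate. Because $\Theta(d)$ is a fixed finite number and $g$ grows superlinearly, the outer supremum is attained and finite. More precisely, $a\Theta-g(a)\to-\infty$ as $a\to\infty$, and $g$ is closed, hence lower semicontinuous, so the objective is upper semicontinuous on a compact sublevel region. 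Strict convexity of $g$ makes the maximizing $a^*$ unique. The optimal pair is therefore $(p^*,a^*)$ with $p^*\in\argmax_p\Theta_p(d)$, and this is the first claim.

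For the first-order condition, assume $g$ is differentiable and $a^*>0$. Then $a\mapsto a\Theta(d)-g(a)$ is concave and differentiable near $a^*$, and setting its derivative to zero gives $g'(a^*)=\Theta(d)$. Concavity also makes this condition sufficient, so the stationary point is the maximizer.

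The argument is essentially routine. The only step needing care is the exchange of the two maximizations, and in particular the degenerate boundary case $a^*=0$. That case arises when $\Theta(d)\le g'(0^+)$, and there the path choice is irrelevant but the formula $U_A^*=g_+^*(\Theta)=0$ still holds. I would state the exchange as the identity $\sup_{p}\sup_{a}=\sup_a\sup_p$. This identity needs no convexity, and then $\sup_p a\Theta_p=a\Theta$ follows for $a\ge0$ by monotonicity of $a\,\cdot$ in its argument.
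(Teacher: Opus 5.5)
Your proposal is correct and follows the same route as the paper: you fix $a\ge 0$, use the path-independence of $-g(a)$ to reduce the inner maximization to $a\Theta(d)$, identify the outer maximization over $a$ with $g_+^*(\Theta(d))$, and get uniqueness and the first-order condition from strict convexity. Your extra checks fill in details that the paper's short proof leaves implicit: finiteness of the path set, attainment via closedness and superlinear growth, and the boundary case $a^*=0$ when $\Theta(d)\le g'(0^+)$.
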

\begin{IEEEproof}
For each fixed $a\geq0$, the term $-g(a)$ is independent of $p$, so maximizing over paths gives $a\Theta(d)-g(a)$. Maximizing this scalar expression over $a\geq0$ is precisely $g_+^*(\Theta(d))$. Strict convexity of $g$ gives uniqueness of an interior effort response and the first-order condition.
\end{IEEEproof}

Consider an example.  For the quadratic cost
\begin{equation}
 g(a)=\tfrac{c_A}{2}a^2,
 \label{eq:quadcost}
\end{equation}
we obtain
\begin{equation}
 a^*(d)=\tfrac{\Theta(d)}{c_A},\quad
 U_A^*(d)=\tfrac{\Theta(d)^2}{2c_A},
 \label{eq:quadresponse}
\end{equation}
while the defender's equilibrium mission loss is
\begin{equation}
 L_{\rm cyber}(d)=a^*\Theta(d)=\tfrac{\Theta(d)^2}{c_A}.
 \label{eq:cyberloss}
\end{equation}
The Stackelberg design problem is therefore
\begin{equation}
 \min_{\sigma,h,n}
 \;C_{\rm arch}(\sigma)+c_h^T h+c_n^T n
 +\beta\Theta(\sigma,h,n)^2,
 \quad \beta=\tfrac{\lambda}{c_A}.
 \label{eq:leaderproblem}
\end{equation}
The attacker is strategic in two ways: it shifts to the most attractive route after every architectural change, and it increases effort when the architecture offers a larger return.

\subsection{Shortest-path representation}
Let
\begin{equation}
 \ell_e(h,n)=-\log r_e(h,n)\geq0.
\end{equation}
For a target $v$, let $d_{h,n}(s,v)$ be the shortest directed-path length from $s$ to $v$ under edge lengths $\ell_e$.

\begin{theorem}[Best attack as a shortest path]\label{thm:shortestpath}
For any fixed architecture and controls,
\begin{equation}
 \log\Theta(\sigma,h,n)
 =\max_{v\in V}\left\{\log V_v-d_{h,n}(s,v)\right\}.
 \label{eq:shortestpath}
\end{equation}
Any shortest path to a maximizing target is an attacker best response.
\end{theorem}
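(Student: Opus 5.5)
The plan is to take logarithms in the definition \eqref{eq:pathyield} of path yield and then reorganize the maximum over paths by terminal vertex. First, since every $r_e(h,n)\in(0,1]$ is strictly positive (because $\rho_e^0,q_e^0>0$ and the exponential factor is positive), each $\Theta_p>0$ and its logarithm is well defined. For a path $p$ ending at $t(p)$ we get
\begin{equation*}
 \log\Theta_p(\sigma,h,n)=\log V_{t(p)}-\sum_{e\in p}\ell_e(h,n),
\end{equation*}
so $\log\Theta_p$ is the log-consequence of the target minus the length of $p$ under the nonnegative edge lengths $\ell_e$.

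Second, since $\log$ is strictly increasing, $\log\Theta=\max_{p\in\calP(\sigma)}\log\Theta_p$. I will partition $\calP(\sigma)$ according to the terminal vertex $v=t(p)$. Within each class $V_v$ is constant, so the inner maximum equals $\log V_v-\min_{p:\,s\to v}\sum_{e\in p}\ell_e=\log V_v-d_{h,n}(s,v)$. The outer maximum over $v$ then gives \eqref{eq:shortestpath}. Here I take $V_v$ to range over the attack-relevant targets, with the convention that $d_{h,n}(s,v)=+\infty$ for unreachable $v$ and that non-targets are excluded (equivalently, have $\log V_v=-\infty$).

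The step that needs care is checking that the minimum over directed paths in $\calP(\sigma)$ really equals the graph shortest-path distance $d_{h,n}(s,v)$ and is attained. Because $\ell_e\ge 0$, any walk from $s$ to $v$ contains a simple path that is no longer, obtained by deleting cycles. So the infimum over walks equals the minimum over the finitely many simple paths, and it is attained. This matters because the attacker's path set might be read as either walks or simple paths; the nonnegativity of $\ell_e$ makes the two readings agree, and it also means Dijkstra's algorithm applies.

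Finally, for the best-response claim, let $v^\star$ attain the outer maximum and let $p^\star$ be any shortest $s\to v^\star$ path. Then $\log\Theta_{p^\star}=\log V_{v^\star}-d_{h,n}(s,v^\star)=\log\Theta$, so $p^\star\in\argmax_p\Theta_p(d)$. The preceding Proposition then shows that $p^\star$, paired with the effort maximizing $a\Theta(d)-g(a)$, is an attacker best response.
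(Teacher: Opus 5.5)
Your proof is correct and takes the same approach as the paper: take logarithms of the multiplicative path yield, group paths by terminal target so that the inner maximization becomes the shortest-path distance $d_{h,n}(s,v)$ under the nonnegative lengths $\ell_e$, then maximize over targets. Your additional checks are details the paper leaves implicit: positivity of $r_e$, walks versus simple paths, the convention for non-target or unreachable vertices, and the appeal to the preceding path-and-effort proposition for the best-response sentence.
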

\begin{IEEEproof}
For a path $p$ ending at $v$, taking the log of \eqref{eq:pathyield} gives
$\log\Theta_p=\log V_v-\sum_{e\in p}\ell_e$.
For each fixed $v$, maximizing over paths is equivalent to minimizing the path length, which gives $d_{h,n}(s,v)$. Maximizing over targets proves \eqref{eq:shortestpath}.
\end{IEEEproof}

The theorem gives a straightforward attack-path calculation. Nonnegative edge lengths permit the use of Dijkstra's algorithm \cite{dijkstra1959note}.  For directed acyclic graphs (DAGs), as would arise for 6G systems, there is a linear-time topological-order solution.

\section{Convex Security Control and Architectural Criteria}
Consider insights into interface control, e.g.\ for the quadratic cost setting.

\subsection{Fixed-architecture convex program}
Fix $\sigma$. Stack the controls as $x=(h^T,n^T)^T$ and define
\begin{equation}
 b_p=\log V_{t(p)}+\sum_{e\in p}\log(\rho_e^0q_e^0),
\end{equation}
with row vector $A_p\geq0$ collecting the $\gamma_j$ and $I_j$ coefficients encountered on path $p$. Then
\begin{equation}
 \log\Theta_p=b_p-A_px,
 \qquad
 \log\Theta=\max_{p\in\calP}(b_p-A_px).
 \label{eq:logtheta}
\end{equation}
For box constraints $0\leq x\leq\bar x$, the fixed-architecture problem is
\begin{subequations}\label{eq:convexprogram}
\begin{align}
 \min_{x,z}\quad & c^Tx+\beta e^{2z}\label{eq:convexobj}\\
 \text{s.t.}\quad &z+A_px\geq b_p,\quad p\in\calP(\sigma),\label{eq:pathconstraints}\\
 &0\leq x\leq\bar x.\label{eq:box}
\end{align}
\end{subequations}

\begin{theorem}[Convexity and exactness]
Problem \eqref{eq:convexprogram} is a convex program and has the same optimum as the continuous inner minimization in \eqref{eq:leaderproblem} for fixed $\sigma$.
\end{theorem}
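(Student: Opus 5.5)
The plan is to prove the two claims separately: convexity by inspecting the objective and feasible set, and exactness by eliminating the epigraph variable $z$ through a monotonicity argument.

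\emph{Convexity.} The objective $c^Tx+\beta e^{2z}$ is a linear term plus $\beta e^{2z}$. The latter is convex because $\beta=\lambda/c_A>0$ and $z\mapsto e^{2z}$ is convex, so the objective is convex in $(x,z)$. Each constraint \eqref{eq:pathconstraints} is affine in $(x,z)$, and \eqref{eq:box} is a box. Hence the feasible set is an intersection of finitely many closed half-spaces, since $\calP(\sigma)$ is finite for a finite acyclic attack graph, or, if cycles are allowed, since it may be restricted to simple paths. The feasible set is therefore a convex polyhedron, and the problem is convex.

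\emph{Exactness.} Fix $\sigma$. By \eqref{eq:logtheta}, the continuous inner problem of \eqref{eq:leaderproblem} over $0\le x\le\bar x$ is
\[
\min_x\; c^Tx+\beta\exp\bigl\{2\max_p(b_p-A_px)\bigr\}.
\]
Here I use $c^Tx=c_h^Th+c_n^Tn$, and I note that $\Theta^2=e^{2\log\Theta}$ because $\Theta>0$.
\begin{enumerate}
\item For any feasible $(x,z)$ of \eqref{eq:convexprogram}, the constraints give $z\ge \max_p(b_p-A_px)=\log\Theta(\sigma,x)$. Since $e^{2z}$ is increasing, the objective value is at least $c^Tx+\beta\Theta(\sigma,x)^2$. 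Thus the optimum of \eqref{eq:convexprogram} is at least the optimum of the inner problem.
\item Conversely, for any box-feasible $x$, set $z(x)=\max_p(b_p-A_px)$. This is finite because $\calP(\sigma)$ is finite and nonempty. The pair $(x,z(x))$ is feasible and attains exactly $c^Tx+\beta\Theta^2$. So the two optima coincide.
\item Strict monotonicity of $e^{2z}$ (using $\beta>0$) further shows that every optimal $(x^*,z^*)$ has $z^*=\log\Theta(\sigma,x^*)$. Hence optimal solutions correspond one-to-one, not merely optimal values.
\end{enumerate}

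Existence of an optimum follows from compactness of the box and continuity of $x\mapsto c^Tx+\beta\Theta^2$, a composition of a max of affine functions with $\exp$. By the correspondence in step~2, this also gives attainment in \eqref{eq:convexprogram}.

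The main obstacle is the finiteness and nonemptiness of $\calP(\sigma)$. If no path to a target exists, $\Theta=0$ and the epigraph has no constraints, so $z\to-\infty$. In that case the infimum $c^Tx$ is matched only in the limit, and I would treat it by convention, with the value $\beta\cdot 0$ on both sides. I would also remark that if the graph has cycles, non-simple paths add constraints dominated by those of simple paths, because $A_p\ge0$ and $\log(\rho_e^0q_e^0)\le0$. It therefore suffices to use the finitely many simple paths, consistent with Theorem~\ref{thm:shortestpath}.
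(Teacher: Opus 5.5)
Your proposal is correct and follows the same route as the paper: convexity comes from a linear-plus-$\beta e^{2z}$ objective over affine path constraints and a box, and exactness holds because the path constraints force $z\ge\log\Theta(x)$ while monotonicity of $e^{2z}$ makes this bound tight at any optimum. Your explicit two-sided value comparison, the attainment argument, and the remarks on empty or cyclic path sets make rigorous details that the paper's short proof leaves implicit.
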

\begin{IEEEproof}
The objective is the sum of a linear function and the convex function $\beta e^{2z}$. Every path constraint is affine, and the box is convex. At any feasible $(x,z)$, \eqref{eq:pathconstraints} implies $z\geq\max_p(b_p-A_px)=\log\Theta(x)$. Since $e^{2z}$ is increasing, every optimum sets $z=\log\Theta(x)$, producing objective $c^Tx+\beta\Theta(x)^2$.
\end{IEEEproof}

The number of paths may be exponential even when the graph is sparse. The formulation nevertheless admits constraint generation. One would solve \eqref{eq:convexprogram} with a subset of path constraints, compute the best attack through \eqref{eq:shortestpath}, add the corresponding violated constraint, and repeat. Separation is therefore a shortest-path computation rather than path enumeration.

The KKT conditions also have an operational interpretation. Let $\mu_p\geq0$ be the multiplier on path $p$. Ignoring active box bounds,
\begin{equation}
 2\beta e^{2z}=\sum_p\mu_p,
 \qquad
 c_j=\sum_p\mu_p A_{pj}.
 \label{eq:kkt}
\end{equation}
Controls should be put in place where their hardening or detection effectiveness covers several co-dominant high-value paths. An interface that appears prominent but absent from active paths should receive no marginal investment.

\subsection{Robust and multi-campaign extensions}
The maximum-path game is appropriate when an actor concentrates a campaign on its best available route. One can also extend in two ways that maintain mathematical tractability.

First, baseline path yields can be uncertain. Let
\begin{equation}
 b_p(u)=\widehat b_p+d_p^Tu,
 \qquad u\in\mathcal{U},
 \label{eq:uncertainb}
\end{equation}
where $\mathcal{U}$ is a compact convex uncertainty set representing uncertain exploitability, detector calibration, or target consequence on the logarithmic scale.

\begin{proposition}[Robust fixed-architecture design]
The defender problem that protects against every $u\in\mathcal{U}$ is the convex program \eqref{eq:convexprogram} with each path constraint replaced by
\begin{equation}
 z+A_px\geq \widehat b_p+\sigma_{\mathcal U}(d_p),
 \label{eq:robustconstraint}
\end{equation}
where $\sigma_{\mathcal U}(d)=\sup_{u\in\mathcal U}d^Tu$ is the support function. For independent box uncertainty $|u_i|\leq\delta_i$, the additive margin is $\sum_i\delta_i|d_{pi}|$.
\end{proposition}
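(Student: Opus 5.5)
The plan is to reduce the robust design problem to the nominal convex program \eqref{eq:convexprogram} by showing that, for each path, the semi-infinite family of constraints indexed by $u\in\mathcal U$ collapses to a single affine constraint.

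First I would make precise what "protects against every $u\in\mathcal U$" means. The robust defender minimizes
\[
c^Tx+\beta\sup_{u\in\mathcal U}\Theta(x;u)^2,
\qquad
\Theta(x;u)=\max_p \exp\bigl(b_p(u)-A_px\bigr).
\]
By the epigraph argument in the proof of the convexity-and-exactness theorem, this equals minimizing $c^Tx+\beta e^{2z}$ subject to
\[
z\geq b_p(u)-A_px\quad\text{for all } p\in\calP(\sigma),\ u\in\mathcal U,
\]
together with the box \eqref{eq:box}. The justification is that $e^{2z}$ is increasing, so the optimal $z$ is the supremum over $(p,u)$ of $b_p(u)-A_px$. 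The two suprema over $u$ and $p$ commute, which handles the fact that the worst case $u$ may depend on the path.

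Second, I would fix $p$ and take the supremum over $u$ inside the constraint. The left side $z+A_px$ does not depend on $u$. The right side is $\widehat b_p+d_p^Tu$. So the family of constraints holds iff
\[
z+A_px\geq \widehat b_p+\sup_{u\in\mathcal U}d_p^Tu=\widehat b_p+\sigma_{\mathcal U}(d_p),
\]
which is \eqref{eq:robustconstraint}. Compactness of $\mathcal U$ makes $\sigma_{\mathcal U}(d_p)$ finite and attained, so each constraint is affine in $(x,z)$ with a finite right-hand side. The convexity argument of the earlier theorem then applies verbatim. Path generation also still works: the separation step becomes the shortest-path computation of Theorem~\ref{thm:shortestpath}, provided the margin decomposes additively over edges, as it does when $d_p$ is a sum of edge-level vectors.

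Third, I would evaluate the box case. For $|u_i|\leq\delta_i$ the supremum separates coordinatewise: $\sup_{|u_i|\le\delta_i}d_{pi}u_i=\delta_i|d_{pi}|$, attained at $u_i=\delta_i\,\mathrm{sign}(d_{pi})$. Summing gives the stated margin.

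The main obstacle is the first step, namely stating correctly how the robust objective relates to the epigraph formulation: the exchange of $\sup_u$ and $\max_p$, and confirming that $\sup_u\Theta^2$ is the intended robust criterion. The remaining steps are routine support-function calculations.
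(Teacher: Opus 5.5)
Your proof is correct and is essentially the paper's: for each path, you collapse the family $z+A_px\geq \widehat b_p+d_p^Tu$, $u\in\mathcal U$, into one affine constraint whose right side $\widehat b_p+\sigma_{\mathcal U}(d_p)$ is finite by compactness, so convexity is unchanged; your epigraph justification of the robust objective and your coordinatewise box computation only fill in details the paper leaves implicit. One side remark, not needed for the statement, is wrong: writing $d_p=\sum_{e\in p}d_e$ does not make $\sigma_{\mathcal U}(d_p)$ additive over edges, because support functions are only subadditive, so the shortest-path separation oracle carries over exactly only in special cases (for a box, when the $d_e$ are coordinatewise sign-consistent, in particular when they have disjoint supports), and otherwise edge-level margins give only a conservative upper bound.
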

\begin{IEEEproof}
The robust counterpart of $z+A_px\geq b_p(u)$ requires the left side to exceed $\sup_{u\in\mathcal U}(\widehat b_p+d_p^Tu)$, which is \eqref{eq:robustconstraint}. The right side is a finite constant for every path, so convexity is unchanged.
\end{IEEEproof}

This result is useful when red teaming or incident data provide confidence intervals rather than precise edge probabilities. It makes conservative design transparent in the sense that uncertainty does not require a new risk metric; it only shifts each path's log-yield by its worst credible margin. A budgeted or ellipsoidal set can reduce the conservatism of independent intervals while retaining a support-function constraint \cite{rockafellar1970convex}.

Second, a capable adversary may run several campaigns rather than selecting one route. Let $a_p\geq0$ be effort allocated to campaign class $p$, with utility
\begin{equation}
 U_A(\mathbf a;d)=\sum_{p\in\mathcal C}a_p\Theta_p(d)
 -\tfrac{1}{2}\sum_{p\in\mathcal C}c_p a_p^2,
 \label{eq:multicampaignutility}
\end{equation}
where $\mathcal C$ contains operationally distinct campaigns rather than every overlapping logical path. Then $a_p^*=\Theta_p/c_p$ independently.

\begin{proposition}[Parallel-campaign convexity]
If defender loss is $\sum_p a_p^*\Theta_p$, then for fixed $\sigma$ the control problem is
\begin{equation}
 \min_{0\leq x\leq\bar x}
 c^Tx+\lambda\sum_{p\in\mathcal C}
 \tfrac{\exp\{2(b_p-A_px)\}}{c_p},
 \label{eq:multicampaignconvex}
\end{equation}
which is convex.
\end{proposition}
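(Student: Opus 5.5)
The plan is to substitute the attacker's decoupled best response into the defender's loss, and then check convexity term by term. For a fixed design $d$, the utility \eqref{eq:multicampaignutility} is separable across campaigns. Each summand $a_p\Theta_p-\tfrac{c_p}{2}a_p^2$ is a strictly concave quadratic in $a_p\ge0$. Since $\Theta_p>0$, its maximizer is the interior point $a_p^*=\Theta_p/c_p$; this is the quadratic case of the Path and effort best response proposition, applied one campaign at a time. Substituting gives the defender loss
\[
\lambda\sum_{p\in\mathcal C}a_p^*\Theta_p=\lambda\sum_{p\in\mathcal C}\Theta_p^2/c_p .
\]
Adding $C_{\rm arch}(\sigma)$, which is constant for fixed $\sigma$, and the linear control cost $c^Tx$ yields the inner objective.

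Next I would express each $\Theta_p$ in the control variables using \eqref{eq:logtheta}. There $\log\Theta_p=b_p-A_px$, with $b_p$ and $A_p$ defined exactly as in the fixed-architecture program. It follows that $\Theta_p^2=\exp\{2(b_p-A_px)\}$, which recovers \eqref{eq:multicampaignconvex} up to the additive constant $C_{\rm arch}(\sigma)$.

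Convexity then follows from standard composition rules. The map $x\mapsto 2(b_p-A_px)$ is affine, and $\exp$ is convex, so each term $\exp\{2(b_p-A_px)\}$ is convex in $x$. Multiplying by the positive weight $\lambda/c_p$ and summing preserves convexity. Adding the linear term $c^Tx$ preserves it as well, and the feasible box $0\le x\le\bar x$ is convex. Unlike the single-path case, no epigraph variable $z$ is needed, because there is no max over paths; the objective is already a smooth convex function.

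I expect no substantial obstacle. The only point needing care is whether the attacker really decouples. This holds because \eqref{eq:multicampaignutility} has no cross terms and no shared budget constraint, so each $a_p^*$ depends only on $\Theta_p$. If campaign classes overlapped in edges, the $\Theta_p$ would still be given by the same log-affine formula, so convexity would be unaffected.
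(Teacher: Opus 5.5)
Your proposal is correct and matches the paper's argument: each summand is a positive multiple of the exponential of an affine function of $x$, hence convex, and adding the linear cost $c^Tx$ over the convex box preserves convexity. You also derive the objective from $a_p^*=\Theta_p/c_p$ and $\log\Theta_p=b_p-A_px$, which the paper states just before the proposition rather than inside its proof.
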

\begin{IEEEproof}
Each term is a positive multiple of the exponential of an affine function, hence convex; their sum plus a linear control cost is convex.
\end{IEEEproof}

The single-best-path model  is more conservative about strategic concentration but less aggressive about attack volume. Equation~\eqref{eq:multicampaignconvex} shows the main conclusion does not depend on a single attack. It also separates two policy regimes: when defenses leave one route dominant, the maximum-path Stackelberg game is natural, but when automation makes many low-cost campaigns simultaneously feasible, a parallel-campaign objective is more appropriate.

\subsection{Architecture comparison}
Let $d=(\sigma,h,n)$ and $d'=(\sigma',h',n')$ be any two fully specified designs, and let
\begin{equation}
 C_D(d)=C_{\rm arch}(\sigma)+c_h^Th+c_n^Tn.
\end{equation}

\begin{proposition}[When disaggregation is preferable]
Under quadratic attacker effort, design $d'$ is preferable to $d$ exactly when
\begin{equation}
 C_D(d')-C_D(d)
 <\beta\left[\Theta(d)^2-\Theta(d')^2\right].
 \label{eq:exactcriterion}
\end{equation}
\end{proposition}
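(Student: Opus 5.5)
The plan is to write down the defender's total Stackelberg objective for each design and rearrange the strict inequality that says $d'$ is better. "Preferable" is read as strictly smaller equilibrium objective once the attacker best-responds. By the Proposition on path and effort best response, and its quadratic specialization \eqref{eq:quadresponse}--\eqref{eq:cyberloss}, the attacker's effort against any committed design $d$ is $a^*(d)=\Theta(d)/c_A$. It is applied along a path in $\argmax_p\Theta_p(d)$, which Theorem~\ref{thm:shortestpath} identifies with a shortest path under the log-lengths $\ell_e$. Substituting into \eqref{eq:defenderraw} gives the equilibrium defender objective
\begin{equation*}
 J_D^*(d)=C_D(d)+\lambda a^*(d)\Theta(d)=C_D(d)+\beta\,\Theta(d)^2,\qquad \beta=\lambda/c_A,
\end{equation*}
which is exactly the objective in \eqref{eq:leaderproblem}.

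With this reduction, the statement is a rearrangement. By definition, $d'$ is preferable to $d$ iff $J_D^*(d')<J_D^*(d)$, that is,
\begin{equation*}
 C_D(d')+\beta\Theta(d')^2<C_D(d)+\beta\Theta(d)^2 .
\end{equation*}
Moving the cost terms to the left and the cyber-loss terms to the right yields \eqref{eq:exactcriterion}. Each step is an equivalence, so the criterion is both necessary and sufficient, which is what ``exactly when'' asserts.

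The only point needing care is that $J_D^*$ is well defined, i.e.\ the defender's loss does not depend on which best response the attacker selects. I would check this directly. The effort $a^*$ is unique by strict convexity of $g$. The path may be non-unique, but every maximizing path attains the same $\Theta_p=\Theta(d)$, so the realized loss $a^*\Theta(d)$ is the same for any tie-breaking. The maximum $\Theta(d)$ exists because $\calP(\sigma)$ is finite on a DAG, or by Theorem~\ref{thm:shortestpath} in general, since shortest distances under nonnegative lengths are attained.

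I would close with a remark rather than further proof. The left side of \eqref{eq:exactcriterion} is the added architecture-plus-control cost of the proposed boundary. The right side is its security gain, measured as the reduction in squared maximal attack yield weighted by $\beta$. This is the sense in which a seam must ``pay for itself.''
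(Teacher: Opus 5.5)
Your proposal is correct and matches the paper's proof: substitute the quadratic best response to get the equilibrium objective $C_D(d)+\beta\Theta(d)^2$ from \eqref{eq:leaderproblem}, then rearrange the strict inequality $J_D(d')<J_D(d)$. Your check that the defender's loss does not depend on how the attacker breaks ties among maximizing paths is a harmless refinement that the paper leaves implicit.
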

\begin{IEEEproof}
Subtract the defender objectives in \eqref{eq:leaderproblem}; \eqref{eq:exactcriterion} is equivalent to $J_D(d')<J_D(d)$.
\end{IEEEproof}

Although direct, the proposition is operationally insightful.  A module boundary has no intrinsic sign. Rather, its value is determined by how it changes the maximum route after the attacker is allowed to switch paths.

For a local calculation, suppose a proposed boundary does not change the identity of the dominant path and multiplies its yield by
\begin{equation}
 m_b=\alpha_b\eta_b e^{-\gamma_bh_b-I_bn_b},
 \label{eq:multiplier}
\end{equation}
where $\alpha_b\geq1$ is added exposure, $\eta_b\leq1$ is containment or privilege reduction, and the exponential term captures hardening and detection. Then security risk falls if and only if $m_b<1$, and the boundary is net beneficial if and only if
\begin{equation}
 \Delta C_b<\beta\Theta^2(1-m_b^2).
 \label{eq:localcriterion}
\end{equation}
Here $\Delta C_b = C_D(d_b) - C_D(d)$ is the incremental defender-side cost of introducing and securing the proposed boundary $b$.
Note that this local criterion is not to be used after a path switch: the global program automatically handles switches.

\begin{corollary}[AI transducer threshold]
Suppose an AI transducer added at boundary $b$ creates exposure multiplier $\alpha_b$ but provides containment $\eta_b$, hardening $h_b$, and detector exponent $I_bn_b$. It reduces yield along the current dominant path if and only if
\begin{equation}
\gamma_bh_b+I_bn_b>\log(\alpha_b\eta_b).
 \label{eq:AIthreshold}
\end{equation}
It improves the defender objective only if \eqref{eq:localcriterion} also covers its implementation and coordination cost.
\end{corollary}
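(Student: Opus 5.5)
The plan is to reduce the corollary to the local multiplier model \eqref{eq:multiplier} and the local criterion \eqref{eq:localcriterion}, treating the AI transducer as a special case of a proposed boundary $b$. I will work under the same standing hypothesis as the local calculation: introducing the transducer does not change the identity of the dominant path $p^\star$. Then Theorem~\ref{thm:shortestpath} together with \eqref{eq:pathyield} shows that the new maximum yield is $\Theta' = m_b\Theta$, with $m_b=\alpha_b\eta_b e^{-\gamma_bh_b-I_bn_b}$. The reason is that the transducer's exposure, containment, hardening, and monitoring effects all act as multiplicative factors on the edges of $p^\star$ that cross $b$. Because the path product is multiplicative, these factors collect into the single scalar $m_b$.

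For the first claim, I will show that yield along $p^\star$ falls exactly when $m_b<1$. Since all quantities are positive, I can take logarithms, which are strictly monotone. This turns $m_b<1$ into
\[
\log\alpha_b+\log\eta_b-\gamma_bh_b-I_bn_b<0,
\]
which rearranges to \eqref{eq:AIthreshold}. I will note that $\alpha_b\geq1$ and $\eta_b\leq1$, so the right-hand side $\log(\alpha_b\eta_b)$ may have either sign. When containment alone outweighs the added exposure, the threshold is met even at $h_b=n_b=0$, so the condition is a genuine threshold rather than a trivially satisfied one. Because the equivalence runs both ways, this gives the ``if and only if.''

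For the second claim, I will apply the architecture comparison proposition \eqref{eq:exactcriterion} with $d'=d_b$. Substituting $\Theta(d_b)=m_b\Theta(d)$ gives $\Delta C_b<\beta\Theta^2(1-m_b^2)$, which is \eqref{eq:localcriterion}. Here $\Delta C_b$ must include the architecture cost increment from \eqref{eq:archcost}, namely $c_M$ for a new module plus $\kappa_b$ for the interface, together with $c_h^T h + c_n^T n$ at the transducer site. That is precisely its implementation and coordination cost. Since $\Delta C_b\geq 0$ in that case, a strict improvement also requires $1-m_b^2>0$, i.e.\ \eqref{eq:AIthreshold}, which is why the statement says ``only if.''

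The main obstacle is justifying $\Theta'=m_b\Theta$, because adding seam $b$ may open new source edges and make another path dominant. I would handle this by stating explicitly that the corollary inherits the no-switch assumption of the local calculation. Without that assumption, $\Theta'\geq m_b\Theta$ still holds, so \eqref{eq:AIthreshold} remains necessary for any reduction along $p^\star$. A full objective comparison, however, would then have to go through the global program instead.
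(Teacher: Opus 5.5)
Your proposal is correct and follows the paper's own reasoning. The paper gives no separate proof: the corollary is read off from the local multiplier \eqref{eq:multiplier} by taking logarithms in $m_b<1$, and the second clause is \eqref{eq:localcriterion}, which you correctly recover from \eqref{eq:exactcriterion} with $\Theta(d_b)=m_b\Theta(d)$ under the no-switch hypothesis. Two small refinements: the first claim (yield along $p^\star$) does not actually need the no-switch assumption, since $p^\star$'s own yield is multiplied by $m_b$ whichever path ends up dominant; and the module-cost increment is $c_M[K(\sigma\cup\{b\})-K(\sigma)]$ rather than exactly $c_M$, because a seam may cut several deployment edges.
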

This corollary formalizes a central AI security question. Better anomaly classification alone is insufficient if the model-serving endpoint, data pipeline, or control API adds more exploitable opportunity than the detector removes.

\subsection{Attacker entry and offensive innovation}
Consider the attacker innovating to develop new attacks.  Let attacker type $t$ have effort parameter $c_t$, fixed development cost $F_t$, and expected attribution or response cost $A_t$. Its optimized payoff on existing routes is
\begin{equation}
 \Pi_t(d)=\tfrac{\nu_t^2\Theta(d)^2}{2c_t}-F_t-A_t,
 \label{eq:typepayoff}
\end{equation}
where $\nu_t$ scales operational value or capability.

\begin{proposition}[Contraction of the active attacker set]
If two defenses satisfy $\Theta(d')\leq\Theta(d)$ and all type parameters are fixed, then
\begin{equation}
 \{t:\Pi_t(d')>0\}\subseteq\{t:\Pi_t(d)>0\}.
 \label{eq:activecontraction}
\end{equation}
\end{proposition}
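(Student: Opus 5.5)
The plan is a direct monotonicity argument on the type payoff \eqref{eq:typepayoff}, viewed as a function of the scalar $\Theta$ with every type parameter held fixed. The key observation is that for each type $t$ the map
\begin{equation}
\theta\mapsto \frac{\nu_t^2\theta^2}{2c_t}-F_t-A_t
\end{equation}
is nondecreasing on $\theta\geq 0$. This holds because $\nu_t^2/(2c_t)\geq 0$, using $c_t>0$ as the effort parameter of a quadratic cost. By \eqref{eq:theta} and \eqref{eq:pathyield}, $\Theta(d)$ is a maximum of products of positive consequences and factors in $(0,1]$, so $\Theta(d)\geq 0$ for every design. Hence the relevant domain is the nonnegative half-line, where squaring is monotone.

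The steps are as follows.

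\begin{enumerate}
\item Take any $t$ with $\Pi_t(d')>0$.
\item From $0\leq\Theta(d')\leq\Theta(d)$, conclude $\Theta(d')^2\leq\Theta(d)^2$.
\item Multiply by the nonnegative constant $\nu_t^2/(2c_t)$.
\item Subtract the design-independent constants $F_t+A_t$ on both sides.
\item Obtain $\Pi_t(d)\geq\Pi_t(d')>0$, so $t$ lies in the right-hand set of \eqref{eq:activecontraction}.
\end{enumerate}

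This chain works only because $F_t$, $A_t$, $c_t$, and $\nu_t$ do not depend on $d$. That is exactly the hypothesis ``all type parameters are fixed,'' and I would cite it explicitly at step 4.

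There is no real obstacle here. The only point needing care is the sign issue: squaring preserves the inequality only because $\Theta\geq 0$, so I would state that nonnegativity up front from the definition of path yield. I would also remark that if $\Theta(d')<\Theta(d)$ strictly and $\nu_t\neq 0$, the payoff inequality is strict. That remark is not needed for the inclusion itself.
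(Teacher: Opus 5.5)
Your proposal is correct and matches the paper's proof: the paper also notes that, with type parameters fixed, each $\Pi_t$ is nondecreasing in $\Theta^2$, though it argues contrapositively (a type unprofitable at $d$ stays unprofitable at $d'$) where you argue directly. Your explicit check that $\Theta\geq 0$, so that $\Theta(d')\leq\Theta(d)$ gives $\Theta(d')^2\leq\Theta(d)^2$, fills in a step the paper leaves implicit.
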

\begin{IEEEproof}
Each $\Pi_t$ is nondecreasing in $\Theta^2$, so any type unprofitable at $d$ remains unprofitable at lower-yield design $d'$.
\end{IEEEproof}

To distinguish attack volume from innovation, suppose type $t$ can pay $K_t$ to discover and execute a novel bypass route with yield $\Theta_t^{\rm new}>\Theta(d)$. The bypass is preferable to commodity routes precisely when
\begin{equation}
K_t<\tfrac{\nu_t^2\bigl[(\Theta_t^{\rm new})^2-\Theta(d)^2\bigr]}{2c_t}.
 \label{eq:bypass}
\end{equation}
As ordinary paths are hardened and monitored, this relative willingness to pay for a bypass increases. Defenses can therefore contract the ordinary attacker population while increasing the incremental value of changing the attack graph for a remaining high-capability actor. 

AI can act on both sides: secure development and automated patching lower baseline $\rho_e^0$ \cite{nist2022ssdf}, whereas offensive AI can reduce the cost of innovation $K_t$ or $c_t$. Since the model does not assume humans permanently dominate innovation, one can identify the parameters that determine the crossover.

\section{Algorithms}
Following the Stackelberg game, here we discuss algorithms for architecture design, as shown in Fig.~\ref{fig:algorithm}.

With $B$ candidate seams, the exact finite design problem enumerates $2^B$ subsets and solves \eqref{eq:convexprogram} for each. This is practical for standards-driven 6G architectural choices, where $B$ is often much smaller than the number of software components. For larger $B$, branch-and-bound, split--merge search, or mixed-integer convex approximations can be combined with the fixed-architecture oracle.

For example considering the seams indicated in Table~\ref{tab:seams}, $B=8$ and so one can evaluate 256 architectures. The attack graph is a DAG, so all source-to-target paths can also be enumerated exactly; depending on the seams, there are 148--182 paths. As we will see, for the architecture we end up selecting for a stylized O-RAN system, there will be 171 paths and eight control sites: four persistent external surfaces plus four selected boundaries.

\begin{figure}
\centering
\fbox{\begin{minipage}{0.94\columnwidth}
\footnotesize
\textbf{Exact architecture design for a finite seam set}
\begin{enumerate}
\item For each $\sigma\subseteq\mathcal{B}$, construct $G_A(\sigma)$, the control sites, $C_{\rm arch}(\sigma)$, and the path coefficients $(b_p,A_p)$.
\item Solve \eqref{eq:convexprogram}. For a large path set, begin with a small constraint set and use \eqref{eq:shortestpath} as a separation oracle until no path violates $z+A_px\geq b_p$.
\item Recover $\Theta=e^z$, $a^*=\Theta/c_A$, the dominant target/path, and total leader objective \eqref{eq:leaderproblem}.
\item Return the architecture and controls having smallest objective; retain other nondominated points as an architecture--security Pareto set.
\end{enumerate}
\end{minipage}}
\caption{Finite-seam Stackelberg design procedure. This  enumerates all architectures and all paths.}
\label{fig:algorithm}
\end{figure}

One can solve \eqref{eq:convexprogram} using sequential quadratic programming with analytic gradients and affine-constraint Jacobians. One can independently cross-check it by minimizing over $z$. For a fixed $z$, the problem
\begin{equation}
 \min_x\;c^Tx\quad
 \text{s.t. }A_px\geq b_p-z,\;0\leq x\leq\bar x
 \label{eq:LPcheck}
\end{equation}
 is a linear program. A bounded scalar minimization of its value plus $\beta e^{2z}$ therefore gives an independent convex solution. 
 
The best path may be calculated  by either direct path enumeration or by shortest paths. 

\section{Stylized O-RAN/6G Design}

Consider the stylized O-RAN system shown in Fig.~\ref{fig:arch} that has the components that were listed in Sec.~\ref{sec:oran}, together with the possible seams for partitioning shown in Table~\ref{tab:seams}.  Here we demonstrate the strategic design of modular architectures.

\begin{figure*}
\centering
\includegraphics[width=0.9\textwidth]{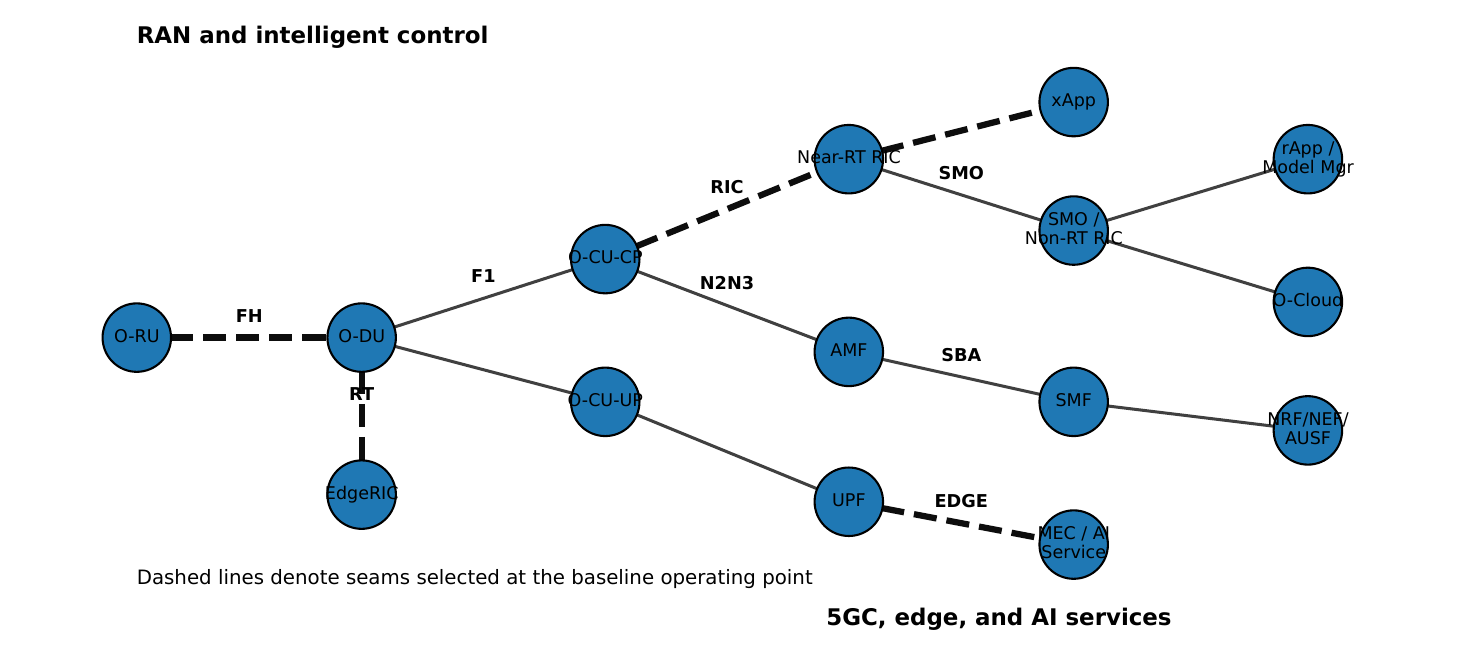}
\caption{Standards-anchored RAN--core--edge system. Dashed edges are seams selected at the baseline Stackelberg optimum. A seam may cut several deployment edges, so the number of modules need not equal $|\sigma|+1$. This is a stylized example rather than a proposed standards diagram.}
\label{fig:arch}
\end{figure*}

\subsection{Scenario}
Let the stylized example emphasize AI-RAN control, in terms of consequences. Mission consequences are set highest for EdgeRIC and near-RT RIC (cost $18$), followed by O-DU (cost $15$), O-CU-CP (cost $14$), O-CU-UP, SMF, and UPF (cost $12$), and AMF (cost $11$). The xApp, model, cloud, exposure, and edge-application nodes have lower direct values but can enable paths to high-value control and forwarding functions. These are normalized consequence weights rather than empirical compromise frequencies.

The base attack graph has 29 directed edges. External routes include radio/site access, O-Cloud or supply-chain access, NEF/NRF/AUSF exposure, MEC/AI application access, and embedded application/model-supply paths. Lateral edges represent, for example, O-Cloud to SMO and DU, rApp/model manager to SMO or xApp, xApp to near-RT RIC, near-RT RIC over E2 toward DU/CU, DU over F1, CU over N2/N3, and core service or N4 transitions. Selecting a seam raises specified crossing-edge exposure by $8$--$16\%$ and creates one or more new entry edges. The values are chosen to generate qualitatively realistic regimes and were not estimated from incident data.

The baseline normalized parameters are
\begin{equation}
 \lambda=0.4,\quad c_A=1,\quad c_M=0.005,
\end{equation}
with seam coordination costs scaled by $0.05$. Hardening and monitoring costs use site-specific coefficients, and detector exponents are largest for real-time RIC, E2/xApp, service-based core, SMO, and edge-AI boundaries, reflecting their potential for rich software and telemetry instrumentation. The control bounds are finite to prevent treating any interface as perfectly secure.

EdgeRIC exposes the architectural tradeoff sharply. It decouples real-time AI control from the RAN stack while remaining close enough to the DU for sub-millisecond operation \cite{ko2024edgeric}. The model  treats the real-time controller as a candidate boundary with high control consequence, added supply/API exposure, and strong monitoring effectiveness. Similarly, the edge-AI service is motivated by cloud-native 6G traffic intelligence and visibility services \cite{zeb2021edge}.

\begin{table}
\caption{Stylized Normalized Parameters}
\label{tab:parameters}
\centering
\small
\renewcommand{\arraystretch}{1.08}
\begin{tabular}{@{}ll@{}}
\toprule
Quantity & Baseline value or range\\
\midrule
Candidate seams / architectures & $8$ / $256$\\
Cyber weight $\lambda$ / effort cost $c_A$ & $0.4$ / $1.0$\\
Module cost $c_M$ & $0.005$\\
Coordination-cost scale & $0.05$\\
Hardening / monitoring cost scales & $0.6$ / $1.0$\\
Seam exposure multipliers & $1.08$--$1.16$\\
Seam entry success factors & $0.14$--$0.23$\\
Detector-effectiveness coefficients & $0.55$--$1.40$\\
Target consequence weights & $2$--$18$\\
\bottomrule
\end{tabular}
\end{table}

\subsection{Sensitivity and attacker population}
The baseline architecture search is performed over all 256 seam subsets. A two-dimensional sensitivity study reoptimizes controls over a set of 34 architectures: the 32 lowest baseline objectives plus the best architecture at every module count, the monolith, and full disaggregation. This reduced set is enough since a full enumeration is unnecessary to demonstrate phase transitions we find.

For the attacker-population illustration, ten types have $c_t=1$ and fixed-cost thresholds equivalent to yield cutoffs from $0.70$ to $2.10$. The bypass yield is set to $1.65$. 

\subsection{Results}
Let us report results of the stylized O-RAN system design.

\subsubsection{An interior architecture is selected}
Table~\ref{tab:mainresults} summarizes the principal operating points. At the stated normalized costs, the exact search selects seams FH, RT, RIC, and EDGE, yielding six deployment modules. This intermediate architecture has lower total objective and lower equilibrium cyber loss than the optimized monolith; the table reports the corresponding values. Its maximum yield and attacker effort are both $0.9504$.

One should not, however, interpret the result as \emph{more modules are safer}. Full disaggregation has lower cyber loss than the monolith but a higher total objective ($1.14423$) due to architecture and control costs. Fig.~\ref{fig:objective} shows the nonmonotone best objective by module count. The optimum occurs at an intermediate granularity, and several module counts have the same equilibrium cyber loss but different non-cyber costs.

\begin{table}
\caption{Stackelberg Operating Points}
\label{tab:mainresults}
\centering
\small
\renewcommand{\arraystretch}{1.10}
\begin{tabular}{@{}lrrr@{}}
\toprule
Design & $K$ & Cyber loss & Total objective\\
\midrule
Optimized monolith & 1 & 2.097 & 1.059\\
Selected seams & 6 & 0.903 & \textbf{0.937}\\
Full disaggregation & 15 & 0.972 & 1.144\\
\bottomrule
\end{tabular}
\end{table}

\begin{figure}
\centering
\includegraphics[width=\columnwidth]{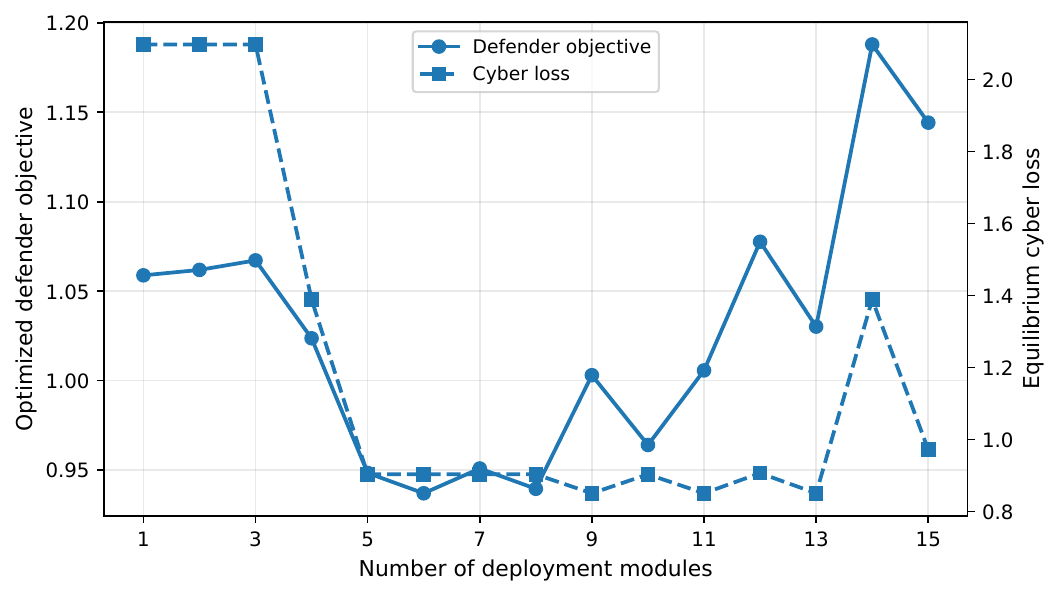}
\caption{Best objective and equilibrium cyber loss by module count. The architecture decision is nonmonotone; $K=6$ is the baseline operating point.}
\label{fig:objective}
\end{figure}

The optimizer allocates hardening primarily to persistent external routes (radio/site, O-Cloud, exposure/API, and MEC/AI entry) and monitoring primarily to the selected FH, RT, RIC, and EDGE boundaries. This division is explained by \eqref{eq:kkt}: external entry sites benefit from reducing baseline traversal, while the selected seams offer strong detector exponents across several co-dominant paths. At the optimum, several routes tie at $\Theta=0.9504$; examples include an exposure-service route to SMF and a route to the O-DU. Such equalization is characteristic of minimax protection.

\subsubsection{Bare disaggregation is harmful}
Consider the ablation study reported in Fig.~\ref{fig:ablation}, where various aspects of the design are omitted. With the selected seams but no hardening or monitoring, cyber loss is $16.46$, compared with $12.53$ for the bare monolith: the partition raises risk by 31.3\%. Once the defender reoptimizes, hardening alone lowers cyber loss to $1.0745$, monitoring alone to $1.3124$, and the joint design to $0.9033$. Monitoring alone does not beat the optimized monolith in total objective, while hardening alone does so only slightly. The large gain requires both controls: new interfaces are not safe merely because they are visible, and hardening alone does not exploit the additional telemetry made available by explicit boundaries.

\begin{figure}[t]
\centering
\includegraphics[width=0.98\columnwidth]{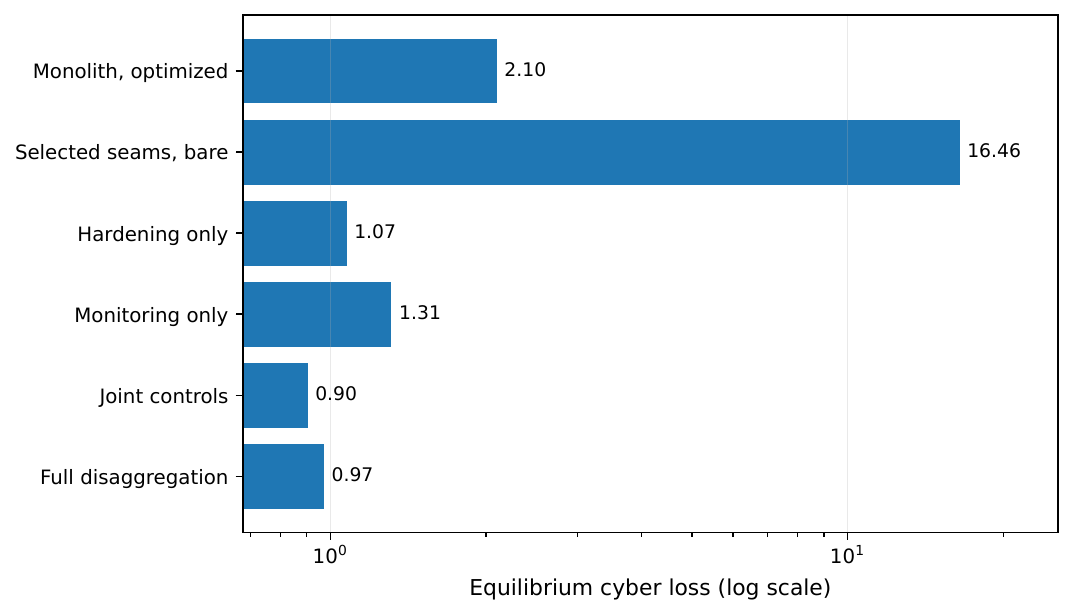}
\caption{Ablation of the selected architecture (logarithmic horizontal scale). Explicit seams without controls increase attack-path yield; improvement requires using both monitoring and hardening.}
\label{fig:ablation}
\end{figure}

\subsubsection{AI visibility versus AI attack surface}
Fig.~\ref{fig:phase} varies two quantities with direct AI-cyber interpretations: the detector-effectiveness scale multiplying $I_j$ and the interface-exposure scale multiplying newly exposed traversal and entry probabilities. When exposure is high and detector effectiveness is modest, the monolith is selected. For example, exposure scale $2.5$ with detector scale $0.5$, $0.75$, or $1.0$ yields $K=1$. When detector effectiveness is high and added exposure is low, the design supports eight to eleven modules. The boundary is the empirical counterpart of conditions \eqref{eq:AIthreshold} and \eqref{eq:localcriterion}.

\begin{figure}[t]
\centering
\includegraphics[width=0.98\columnwidth]{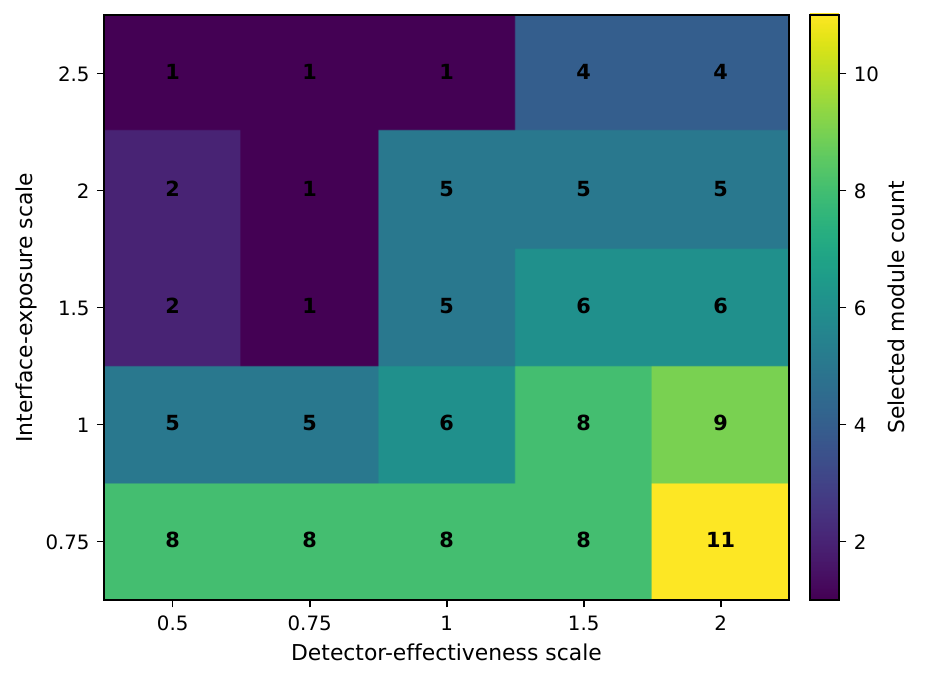}
\caption{Selected module count as detector effectiveness and interface exposure vary. The baseline is $(1,1)$; each cell reoptimizes over the listed candidate set.}
\label{fig:phase}
\end{figure}

This comparative static has two implications. First, automated code assurance and patching that reduce $\rho_e^0$ make more disaggregation supportable. Second, AI monitoring that raises $I_j$ can shift the optimum toward explicit modules, but only if the transducer's own service and model surface do not raise $\alpha_b$ faster. O-RAN's recent emphasis on AI security requirements, software authenticity, model-dependency checks, and continuous monitoring is consistent with treating these as coupled design variables rather than separate checklists \cite{oran2026security,soltani2025intelligent}.

\subsubsection{Attacker entry and innovation}
Fig.~\ref{fig:population} applies \eqref{eq:typepayoff} and \eqref{eq:bypass}, with respect to attacker innovation. At detector scale $0.4$, four of ten illustrative attacker types are profitable; the optimized architecture adapts as detector effectiveness increases, and no type remains profitable at scale $2.0$. The relative willingness to pay for the illustrated bypass rises from $0.824$ to $1.124$. The two effects are not contradictory. Better defense suppresses commodity paths and narrows entry, but a remaining actor that can create a sufficiently valuable bypass has more strategic reason to do so.

\begin{figure}[t]
\centering
\includegraphics[width=0.98\columnwidth]{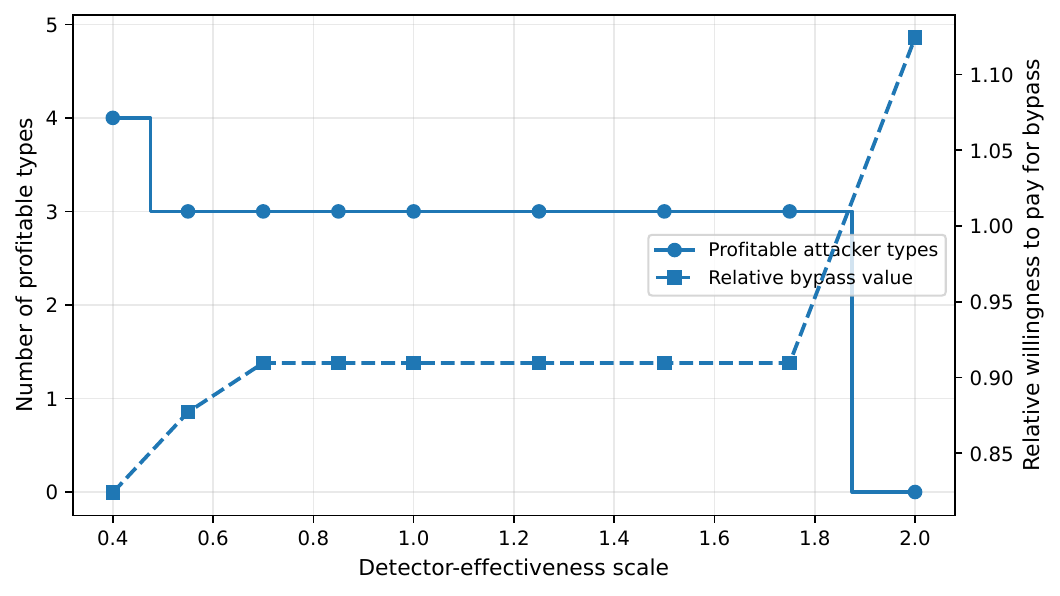}
\caption{Illustrative attacker-entry and bypass comparative statics. Improved detection contracts the profitable ordinary-attacker set while increasing the relative value of a novel bypass.}
\label{fig:population}
\end{figure}

This result gives a qualified mathematical basis for shifting defensive testing from only high-volume known attacks toward rare topology-aware bypasses as ordinary interfaces become well patched and monitored.  Offensive AI can lower bypass development cost; defensive AI can reduce baseline exploitability and improve detection. The relevant policy variable is the relative movement of $\rho_e^0$, $I_j$, $K_t$, and $c_t$.

\section{Extension: Global and Path-Dependent Attack Saturation}
\label{sec:saturation}
Insights thus far inspire an extension.
The baseline model makes attacker effort endogenous, but the gross return
$a\Theta_p$ remains linear in effort. This is appropriate when effort
represents scalable campaign volume. When attack effectiveness is constrained by a finite campaign capacity or by a finite set of target-specific opportunities, however, marginal return may decline. We
distinguish a \emph{global} saturation model, whose effort transformation is
common to every path, from a \emph{path-dependent} model, whose saturation
rate also reflects path survival. The first preserves the shortest-path response whereas the second can change the global target ranking. 

The two saturation models capture different realistic sources of diminishing attack returns. Global saturation represents campaign-level limitations that affect all attack paths similarly, such as finite attack windows, rate limits, limited compute or operator capacity, and diminishing value from repeatedly exploiting the same vulnerability. Path-dependent saturation instead captures multistage attacks in which raw effort must successfully traverse a particular route before reaching a finite set of opportunities at the target, so paths with different survival probabilities saturate at different rates.

\subsubsection{Global saturation preserves the shortest-path response}
Define the common effective-effort function
\begin{equation}
 s_{\bar a}(a)=\bar a\left(1-e^{-a/\bar a}\right),\qquad \bar a>0,
 \label{eq:global-saturation-function-revised}
\end{equation}
and let
\begin{align}
 U_A^{\rm global}(p,a;d)
 &=\Theta_p(d)s_{\bar a}(a)-\tfrac{c_A}{2}a^2,
 \label{eq:global-saturation-utility-revised}\\
 L^{\rm global}(p,a;d)
 &=\Theta_p(d)s_{\bar a}(a).
 \label{eq:global-saturation-loss-revised}
\end{align}
The normalization $s'_{\bar a}(0)=1$ means that small effort has the same
initial marginal return as in the baseline model, while
$s_{\bar a}(a)\leq\bar a$ makes the total effective effort bounded.

\begin{proposition}[Global saturation and path choice]
For fixed $d$, every positive-effort best response under
\eqref{eq:global-saturation-utility-revised} selects a path in
$\argmax_p\Theta_p(d)$. Hence Theorem~\ref{thm:shortestpath} is unchanged.
\end{proposition}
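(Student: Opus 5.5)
The argument mirrors the proof of the path and effort best-response proposition: fix the effort level first, then optimize over paths. The only new ingredient is that the effective-effort factor $s_{\bar a}(a)$ is common to every path and strictly positive whenever $a>0$.

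\textbf{Step 1: fix $a>0$.} The map $a\mapsto s_{\bar a}(a)=\bar a(1-e^{-a/\bar a})$ is strictly increasing with $s_{\bar a}(0)=0$, so $s_{\bar a}(a)>0$ for every $a>0$. For such a fixed $a$, the term $-\tfrac{c_A}{2}a^2$ does not depend on $p$. Hence
\begin{equation}
\max_{p\in\calP(\sigma)} U_A^{\rm global}(p,a;d)=\Theta(d)\,s_{\bar a}(a)-\tfrac{c_A}{2}a^2 ,
\end{equation}
and the maximizing paths are exactly $\argmax_p\Theta_p(d)$. They do not depend on $a$, and because $s_{\bar a}(a)>0$ the ordering of paths is preserved strictly. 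In particular, any path $p$ with $\Theta_p(d)<\Theta(d)$ gives strictly lower utility than a maximizing path at the same effort.

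\textbf{Step 2: take a joint best response.} Let $(p^*,a^*)$ be a joint best response with $a^*>0$, and suppose $\Theta_{p^*}(d)<\Theta(d)$. Replacing $p^*$ by any $\hat p\in\argmax_p\Theta_p(d)$ while keeping $a^*$ fixed strictly raises utility, by Step 1. This contradicts optimality, so $p^*\in\argmax_p\Theta_p(d)$.

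For completeness, a best response exists. The path set is finite, and for each path the effort utility is continuous, equals $0$ at $a=0$, and tends to $-\infty$ as $a\to\infty$ because $s_{\bar a}$ is bounded by $\bar a$. Existence is not strictly needed for the statement, which only concerns best responses that do occur.

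\textbf{Step 3: conclude the shortest-path property.} The optimal path set coincides with $\argmax_p\Theta_p(d)$, which is the set characterized in Theorem~\ref{thm:shortestpath}. Log-transformed shortest paths to a maximizing target therefore remain attacker best responses.

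\textbf{Expected obstacle.} The main care point is the zero-effort case. At $a=0$ every path yields utility $0$, so the path choice is indeterminate; this is why the statement restricts to positive-effort responses. I would also remark that positive effort does occur whenever $\Theta(d)>0$, since the derivative of the utility at $a=0$ is $\Theta(d)s'_{\bar a}(0)=\Theta(d)>0$. The remaining steps are routine separability arguments.
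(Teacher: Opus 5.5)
Your proof is correct and follows the same route as the paper: for fixed $a>0$ the factor $s_{\bar a}(a)>0$ and the cost $\tfrac{c_A}{2}a^2$ do not depend on the path, so paths are ranked exactly by $\Theta_p(d)$, and the $a=0$ case is a tie. Your remarks that a best response exists, and that effort is positive whenever $\Theta(d)>0$ (because $s'_{\bar a}(0)=1$), are correct additions that the statement does not need.
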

\begin{IEEEproof}
For fixed $a>0$, $s_{\bar a}(a)>0$ is common to all paths and the effort cost
is path-independent. Path ranking is therefore exactly the ranking by
$\Theta_p(d)$. At $a=0$ all paths tie, so a maximum-$\Theta_p$ path remains a
best response.
\end{IEEEproof}
As $\bar a$ grows, $s_{\bar a}(a)$ approaches $a$, so the model approaches the
baseline linear-return case.

\subsubsection{Path-dependent saturation changes the global target rule}
Define the one-attempt survival
of path $p$:
\begin{equation}
 R_p(d)=\prod_{e\in p}r_e(d)
 =\frac{\Theta_p(d)}{V_{t(p)}}.
 \label{eq:path-saturation-survival-simple}
\end{equation}
Let $\omega>0$ be a common
opportunity scale. Define
\begin{align}
 U_A^{\rm path}(p,a;d)
 &=\omega V_{t(p)}
 \left(1-e^{-aR_p(d)/\omega}\right)-\tfrac{c_A}{2}a^2,
 \label{eq:path-saturation-utility-simple}\\
 L^{\rm path}(p,a;d)
 &=\omega V_{t(p)}
 \left(1-e^{-aR_p(d)/\omega}\right).
 \label{eq:path-saturation-loss-simple}
\end{align}
Initial marginal return is $V_{t(p)}R_p=\Theta_p$, while limiting return is
$\omega V_{t(p)}$. Thus initial attractiveness and ultimate consequence matter
separately; as $\omega\to\infty$, the baseline return $a\Theta_p$ is recovered.

For each reachable target $v$, let $R_v(d)$ be the maximum
one-attempt survival probability over all paths from $s$ to $v$.

The optimized payoff associated with target $v$ is
\begin{equation}
 \Psi_v(d)
 =\max_{a\geq0}
 \left\{
 \omega V_v\left(1-e^{-aR_v(d)/\omega}\right)-\tfrac{c_A}{2}a^2
 \right\}.
 \label{eq:path-saturation-target-value}
\end{equation}

\begin{proposition}[Path-dependent saturating response]
For every fixed target $v$ with a positive-effort response, the attacker uses
a shortest path to $v$. Across targets, however, it selects
\begin{equation}
 v^*\in\argmax_v\Psi_v(d),
 \label{eq:path-saturation-target-choice-simple}
\end{equation}
which need not be a target maximizing $\Theta_v=V_vR_v$.
\end{proposition}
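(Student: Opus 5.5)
The argument splits into the within-target claim and the across-target claim. For the first, I fix a target $v$ and an effort $a>0$. Every path to $v$ shares the consequence $V_v$ and the cost $\tfrac{c_A}{2}a^2$. The map $R\mapsto \omega V_v(1-e^{-aR/\omega})$ is strictly increasing in $R$ whenever $a>0$. For each fixed positive effort, the attacker therefore prefers the path with the largest one-attempt survival $R_p(d)$.

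By the log transformation behind Theorem~\ref{thm:shortestpath}, maximizing $R_p=\prod_{e\in p}r_e$ over paths ending at $v$ is the same as minimizing $\sum_{e\in p}\ell_e$. The maximizer is thus a shortest path to $v$ under the lengths $\ell_e$, and it attains $R_v(d)=e^{-d_{h,n}(s,v)}$. Since this holds pointwise in $a>0$, it survives the maximization over $a$. Any positive-effort best response directed at $v$ uses a shortest path to $v$, and the attacker's best value at $v$ is exactly $\Psi_v(d)$ in \eqref{eq:path-saturation-target-value}.

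For the across-target rule, I write the full attacker problem as $\max_v\max_{p\to v}\max_{a\ge0}U_A^{\rm path}$. Interchanging the maxima and applying the first step to the inner path choice reduces this to $\max_v\Psi_v(d)$, which gives \eqref{eq:path-saturation-target-choice-simple}. One boundary case needs care: at $a=0$ every path yields zero, so the argument only concerns positive-effort responses. Because the derivative of the payoff at $a=0$ equals $V_vR_v>0$, the optimal effort is in fact strictly positive for every reachable target.

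The main work is showing that $v^*$ need not maximize $\Theta_v$, which I will do with a two-target counterexample. Target 1 has a large consequence but a low survival, for example $V_1=100$ and $R_1=0.01$, giving $\Theta_1=1$. Target 2 has $V_2=1$ and $R_2=0.9$, giving $\Theta_2=0.9<\Theta_1$. With $c_A=1$, I take $\omega$ small, say $\omega=0.01$. Then $\Psi_2\le\omega V_2=0.01$, while the effort $a=1$ on target 1 yields $1-e^{-1}-0.5\approx0.13$, so $\Psi_1>\Psi_2$. This choice points the wrong way, because it makes target 1 win even though it also has the larger $\Theta$.

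The counterexample has to be reversed: target 1 should have the larger initial slope $\Theta_1$ but a small consequence cap $\omega V_1$, and target 2 a slightly smaller slope with a much larger cap. I take $V_1=1$, $R_1=1$ (so $\Theta_1=1$), $V_2=100$, $R_2=0.009$ (so $\Theta_2=0.9$), with $\omega=0.01$ and $c_A=1$. Then $\Psi_1\le\omega V_1=0.01$. For target 2, the effort $a=0.5$ gives $aR_2/\omega=0.45$, so the payoff is $1\cdot(1-e^{-0.45})-0.125\approx0.36-0.125>0.01$. Hence $v^*=2$ while $\arg\max_v\Theta_v=1$. Getting these numbers to verify cleanly is the main obstacle, since the bound on $\Psi_1$ is immediate and the lower bound on $\Psi_2$ needs only a single feasible effort.
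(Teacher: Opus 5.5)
Your proof is correct. Its first half matches the paper's argument: strict monotonicity of $R\mapsto\omega V_v(1-e^{-aR/\omega})$ for each fixed $a>0$, carried through the maximization over $a$, then the log transform to a shortest path. Your reduction to $\max_v\Psi_v(d)$ by interchanging maxima is also fine.

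The two proofs differ on the ``need not maximize $\Theta_v$'' claim. The paper gives no explicit instance. It fixes $\Theta_v$ and $a>0$ and substitutes $R_v=\Theta_v/V_v$. It then computes the $V_v$-derivative of the gross return as $\omega[1-(1+x)e^{-x}]>0$. From this it concludes that at equal initial yield the higher-consequence target strictly wins whenever the lower one's optimal effort is positive. Continuity then lets that target beat one with slightly larger $\Theta_v$.

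Your two-target instance ($V_1=1$, $R_1=1$ versus $V_2=100$, $R_2=0.009$, with $\omega=0.01$, $c_A=1$) checks out. You have $\Psi_1\le\omega V_1=0.01$, while $a=0.5$ gives $\Psi_2\ge 1-e^{-0.45}-0.125\approx 0.24$. The instance is self-contained and needs no derivative or continuity step. It also shows the violation can be gross rather than a perturbation of a tie.

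The paper's route buys generality and mechanism instead. For any $\omega$, $c_A$ and any pair $V_{\rm low}<V_{\rm high}$, near-ties in $\Theta$ are broken toward the larger consequence. This isolates the consequence-cap effect behind, e.g., the EdgeRIC-over-O-DU switch in the experiments. Your instance, by contrast, rests on one fairly extreme parameter choice.

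You also observe that $\partial_a U_A^{\rm path}$ at $a=0$ equals $V_vR_v>0$, so every reachable target has positive optimal effort. This is a worthwhile addition, since it discharges the paper's ``whenever the optimizer is positive'' proviso. In a final write-up, delete your first, wrong-direction attempt.
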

\begin{IEEEproof}
For fixed $v$ and $a>0$, the gross-return term in
\eqref{eq:path-saturation-utility-simple} is strictly increasing in $R_p$.
Optimizing over effort preserves this ordering. The best route to $v$
therefore maximizes $R_p$, or equivalently minimizes
$-\log R_p=\sum_{e\in p}\ell_e$, and is a shortest path. Across targets,
the score is not a function of $\Theta_v=V_vR_v$ alone. Indeed, hold
$\Theta_v$ and $a>0$ fixed and substitute $R_v=\Theta_v/V_v$ into the gross
return. Its derivative with respect to $V_v$ is
$\omega[1-(1+x)e^{-x}]>0$, where $x=a\Theta_v/(\omega V_v)>0$.
Thus, whenever the lower-$V_v$ target's optimizer is positive, a
higher-$V_v$ target has a strictly larger optimized score at the same
$\Theta_v$; by continuity, it can also outrank a target with slightly larger
$\Theta_v$.
\end{IEEEproof}

Thus shortest paths survive within targets, but the global maximum-$\Theta_p$
rule need not. 

\subsubsection{Numerical comparison}
\textit{Saturation changes architecture choice and can invalidate the maximum-yield target rule:}
We first consider common exponential saturation, for which Proposition~6 guarantees that the attacker still selects a maximum-$\Theta_p$ path. To compare saturation strengths on a common scale, normalize the saturation parameter by the baseline linear-return effort,
$\eta=\frac{\bar a}{a_{\mathrm{ref}}^{\mathrm{lin}}}$, where $a_{\mathrm{ref}}^{\mathrm{lin}}=\frac{\Theta_{\mathrm{ref}}}{c_A}=0.9504$.
Small $\eta$ corresponds to strong saturation, whereas large $\eta$ approaches the baseline linear-return model. At each of 25 logarithmically spaced values
$\eta \in [10^{-2},10^2]$, we enumerate all 256 architectures and reoptimize hardening and monitoring. Strong saturation favors integration: the monolith is selected through the sampled value $\eta=6.8129$. As saturation weakens, the original
$\{\mathrm{FH},\mathrm{RT},\mathrm{RIC},\mathrm{EDGE}\}$
six-module architecture reemerges and is selected from $\eta=10$ onward. Thus saturation can shift the preferred architecture even when the attacker's shortest-path ranking remains unchanged.

Path-dependent saturation has a qualitatively different effect: it can change which target the attacker prefers. To compare with the common-saturation model, let
$R_{\mathrm{ref}}=0.0528$
be the baseline survival of the canonical reference path and set
\begin{equation}
\omega
=
\eta a_{\mathrm{ref}}^{\mathrm{lin}} R_{\mathrm{ref}}.
\tag{45}
\end{equation}
A path with survival $R_{\mathrm{ref}}$ then has effective scale
$\omega/R_{\mathrm{ref}}
=
\eta a_{\mathrm{ref}}^{\mathrm{lin}}$,
matching the common normalization. For each $\eta$, freeze the 256 policies optimized under common saturation and recompute their exact path-dependent follower responses.

\begin{figure}[t]
    \centering
    \includegraphics[width=\columnwidth]
    {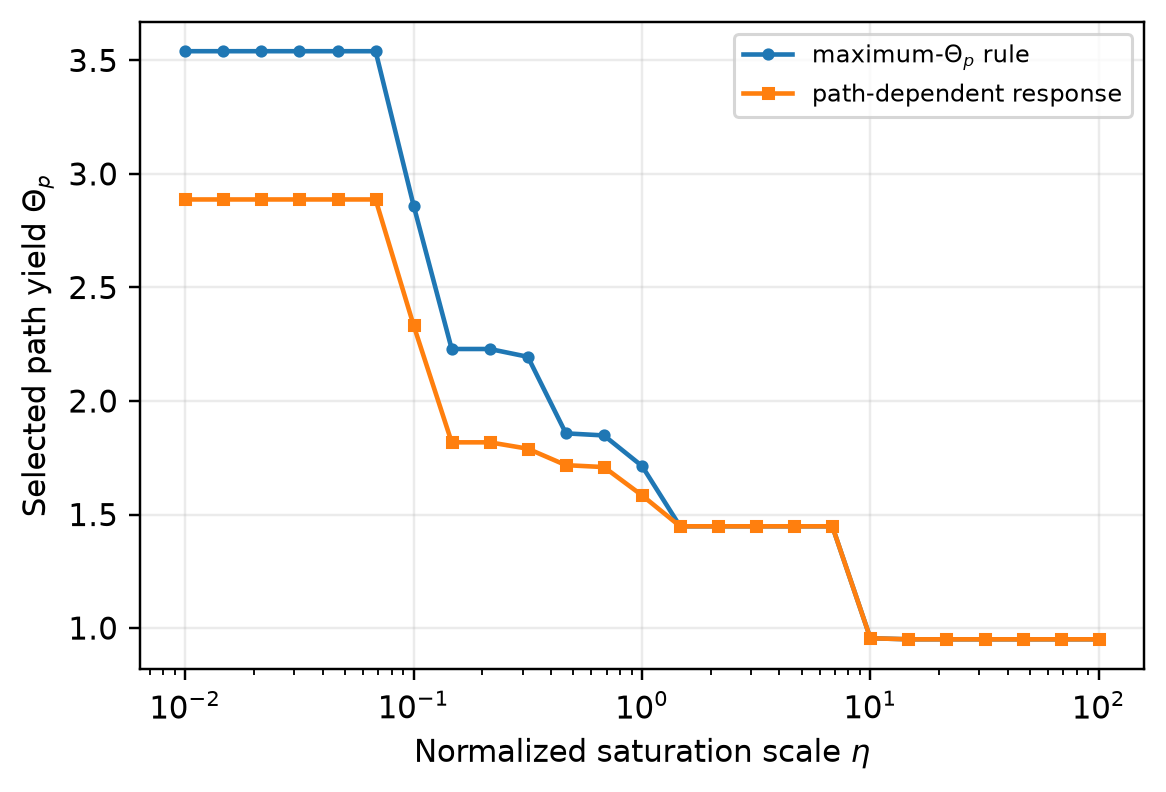}
    \caption{Selected path yield under the maximum-$\Theta_p$ rule
    and the path-dependent saturation response.}
    \label{fig:path-dependent-selected-yield}
\end{figure}

Fig.~8 shows that the baseline maximum-$\Theta_p$ target rule fails nontrivially: the path-dependent attacker selects a path with strictly smaller $\Theta_p$ at 13 of the 25 parameter values. Under the deterministic canonical tie convention, the selected target label changes at 18 values, with the additional five changes arising from ties among paths having the same maximum $\Theta_p$.

Experiments show numerous  target switches, illustrating the mechanism in Proposition~7. At $\eta=0.1$, for example, the maximum-$\Theta_p$ rule ranks an O-DU path with $\Theta=2.8602$ first, whereas the path-dependent attacker selects EdgeRIC with the smaller yield $\Theta=2.3327$, because EdgeRIC has the larger mission consequence, $18$ rather than $15$. At $\eta=1$, the maximum-$\Theta_p$ rule selects a UPF path with $\Theta=1.7136$, whereas the path-dependent response selects a near-RT RIC path with $\Theta=1.5844$. Hence initial path attractiveness and ultimate target consequence can matter separately once repeated attack opportunities saturate.

These target changes can also affect architectural evaluation. Among the frozen policies, recomputing the path-dependent follower response changes the preferred policy at several intermediate values of $\eta$, with a maximum defender-objective improvement of $0.04042$ relative to blindly retaining the common-saturation winner. This calculation is a policy-transfer stress test rather than a globally reoptimized solution of the path-dependent leader problem. As saturation weakens, both the target ranking and the preferred six-module design converge toward the baseline model, as expected.

\section{Design and Policy Implications}

Let us distill design and policy insights that emerge.

\subsubsection{Open and secure require interface-by-interface accounting}
The objective of open, interoperable, secure, and resilient 6G \cite{state2024principles,ntia2026call} should not be reduced to either maximal openness or maximal integration. Equation~\eqref{eq:exactcriterion} implies that each proposed interface must be evaluated through four quantities: added coordination cost, added attack paths, containment/hardening effectiveness, and detection effectiveness. A standard interface can be strategically valuable when it creates an auditable, least-privilege choke point; the same interface can be harmful when it merely makes a powerful function remotely reachable.

\subsubsection{AI-enabled secure development changes the architecture frontier}
Secure software development reduces baseline edge factors $\rho_e^0$, whereas automated detection and patching can shorten the duration for which an exposed edge remains attractive. NIST's SSDF provides a general secure-development framework for reducing software vulnerabilities \cite{nist2022ssdf}. In the game formulation, such improvements lower the exposure multiplier $\alpha_b$ or increase effective hardening $\gamma_bh_b$. They therefore shift the feasible frontier of modularity: verified or rapidly patched interfaces permit more decomposition before added surface dominates.

This does not imply that AI automatically favors disaggregation. AI controllers, xApps, rApps, model stores, feature pipelines, and inference services can create high-leverage source edges. The transducer condition \eqref{eq:AIthreshold} is the appropriate test. Procurement and standardization should measure both sides of the same AI component: its detector gain and its incremental exploitability.

\subsubsection{Zero trust is a resource allocation problem}
Zero trust emphasizes explicit authentication, authorization, least privilege, and continuous monitoring rather than implicit trust based on network location \cite{nist2020zta,oran2024zta}. In the model these mechanisms reduce traversal factors and raise detection exponents on particular edges. Equation~\eqref{eq:kkt} adds a prioritization rule: invest first on controls that intersect several co-dominant high-consequence paths. Uniform hardening of every interface can waste resources on routes that a strategic attacker would not select.

\subsubsection{From many ordinary actors to fewer innovation-capable actors}
Proposition~5 is a comparative-static result, not a prediction. Yet, we find that if secure development, monitoring, and containment lower $\Theta$, attacker types with small margins exit. If attribution raises $A_t$, the set contracts further. A small remaining set is easier to analyze strategically, but stability additionally requires credible attribution, response, and control of false flags, none of which are addressed here. The bypass condition \eqref{eq:bypass} nonetheless clarifies why offensive cyber portfolios focused only on attack volume can become less relevant against mature architectures: residual advantage comes from changing the attack graph, not merely repeating the best known path.

\section{Conclusion}
This paper provides a mathematical formulation of the 6G monolithic--modular security tradeoff that is core to trustworthy system design. Architectural disaggregation changes the attacker's path set, interface hardening and monitoring change the value of each path and a Stackelberg attacker chooses both the most attractive route and an endogenous effort level. The logarithmic path representation produces a shortest-path best response, while the defender's fixed-architecture problem emerges as convex. The resulting criterion shows that the visibility and containment benefits of a boundary must exceed its added attack opportunity and coordination cost.

The stylized O-RAN design provides further insight into the theoretical result. At a baseline operating point, the best design is neither the monolith nor full disaggregation. The intermediate architecture selected in the example improves the optimized objective, yet the same seams without security controls are substantially worse than the bare monolith. As AI detector effectiveness improves relative to AI-enabled interface exposure, greater disaggregation is appropriate. As ordinary path yield falls, the profitable attacker population contracts and residual incentives move toward novel bypasses. 

Architectural disaggregation is only one component of trustworthy 6G. End-to-end trust also depends on numerous factors including privacy, resilience, secure software and hardware supply chains, robust AI, and protection against jamming, spoofing, and malicious sensing. Yet architecture remains foundational, especially with growing softwarization.  As 6G integrates communication, sensing, computation, control, and autonomous intelligence, interface placement determines where trust must be established, where attacks can be observed, and how failures propagate. The present framework complements other dimensions of trustworthy 6G by helping choose the system structure within which they must operate coherently.

To push the mathematical approach to operational use, one would need to determine real-world parameters drawing on evidence from red-team exercises, vulnerability databases, software inventories, O-RAN telemetry, detector operating curves, incident-response data, and service-impact models.  The model itself can be extended beyond a DAG attack graph with multiplicative conditional factors, since real attacks can revisit states, exploit correlated credentials, and adapt after partial detection.  To handle such cyclic graphs, the shortest-path form may need modification.
The continuous detector model abstracts false-alarm and operational-load costs into $c_n$ and $I_j$; a full design could directly draw on hypothesis testing results and optimize thresholds using measured receiver-operating characteristics.

\bibliographystyle{IEEEtran}
\bibliography{references}

\end{document}